\documentclass[12pt, onecolumn, draftcls]{IEEEtran}

\usepackage{epsfig, psfrag,latexsym,amsmath,epsf,amssymb,bm}
\usepackage{cite}

\addtolength{\hoffset}{-.5cm} \addtolength{\textwidth}{1cm}
\addtolength{\voffset}{-.5cm} \addtolength{\textheight}{1cm}

\newtheorem{theorem}{Theorem}
\newtheorem{corollary}{Corollary}
\newtheorem{lemma}{Lemma}

\newtheorem{Def}{Definition}
\newtheorem{rem}{Remark}

\pagestyle{plain}

\begin{document}
\title{Interference Channels with Correlated \linebreak Receiver Side Information \thanks{This research was supported in part by the U.S. National Science Foundation under Grants  ANI-03-38807, CCF-07-28208, and CNS-06-25637, the DARPA ITMANET program under Grant 1105741-1-TFIND, and the U.S. Army Research Office under MURI award W911NF-05-1-0246.}}
\author{Nan Liu$^1$, D. G\"{u}nd\"{u}z$^{1,2}$, A. Goldsmith$^{1}$ and H. V. Poor$^{2}$\\
$^{1}$ Dept. of Electrical Engineering, Stanford Univ., Stanford, CA 94305, USA\\
\and $^{2}$ Dept. of Electrical Engineering, Princeton Univ., Princeton, NJ 08544, USA\\
\and \texttt{{\small \{nanliu@stanford.edu, dgunduz@princeton.edu, andrea@stanford.edu, poor@princeton.edu\}.}}}
%
%
%
\maketitle

\begin{abstract}
The problem of joint source-channel coding in transmitting independent sources over interference channels with correlated receiver side information is studied. When each receiver has side information correlated with its own desired source, it is shown that source-channel code separation is optimal. When each receiver has side information correlated with the interfering source, sufficient conditions for reliable transmission are provided based on a joint source-channel coding scheme using the superposition encoding and partial decoding idea of Han and Kobayashi. When the receiver side information is a deterministic function of the interfering source, source-channel code separation is again shown to be optimal. As a special case, for a class of Z-interference channels, when the side information of the receiver facing interference is a deterministic function of the interfering source, necessary and sufficient conditions for reliable transmission are provided in the form of single letter expressions. As a byproduct of these joint source-channel coding results, the capacity region of a class of Z-channels with degraded message sets is also provided.
\end{abstract}

\vspace*{0.4cm}
\noindent {\em Index terms:}
Interference channel, joint source-channel coding,
receiver side information, source-channel separation theorem

\newpage

\section{Introduction}\label{s:introduction}

The wireless medium is shared by multiple communication systems operating simultaneously, which leads to interference among users transmitting over the same frequency band. In the simple scenario of two transmitter-receiver pairs, the interference channel \cite{Shannon:1961} models two simultaneous transmissions interfering with each other. In the classical interference channel model, the sources intended for each receiver are independent of each other, and the receivers decode based only on their own received signals. On the other hand, in applications such as sensor networks, it is reasonable to assume that the receivers have access to their own correlated observations about the underlying source sequences as well. These correlated observations at the receivers can be exploited to improve the system performance.

Even in the absence of side information, a finite letter expression for the capacity region of an interference channel in the general case is unknown. We have the capacity region in the case of interference channels with statistically equivalent
outputs \cite{Ahlswede:1971,Sato:1977,Carleial:1978}, discrete additive degraded
interference channels \cite{Benzel:1979}, a class of
deterministic interference channels \cite{ElGamal:1982}, strong interference channels
\cite{Carleial:1975, Sato:1978MayIC, Han:1981, Sato:1981,Costa:1987}, a class of degraded interference channels \cite{Liu_Ulukus:2006Allerton}, and more recently for a class of Z-interference channels \cite{Liu_Goldsmith:2008ISIT}. The best known achievable rate region is due to Han and Kobayashi \cite{Han:1981}, a simplification of which is given in \cite{Chong:2006}.

In a point-to-point scenario, the availability of correlated side information at the receiver is considered in \cite{Shamai:ETT:95}. It is shown that the source-channel separation theorem applies in this simple setting and, moreover, that Slepian-Wolf source coding followed by channel coding is optimal. With the availability of side information at the receiver, we can transmit the source reliably over a channel with smaller capacity than the one required when there is no receiver side information. However, it is known that the source-channel separation theorem does not generalize to multi-user channels \cite{Shannon:1961}, \cite{Cover:1980}, and necessary and sufficient conditions for reliable transmission in the case of correlated sources and correlated receiver side information are not known in general. In \cite{Tuncel:IT:06}, necessary and sufficient conditions are characterized for broadcasting a common source to multiple receivers with different correlated side information. An alternative achievability scheme for the setup of \cite{Tuncel:IT:06} is given in \cite{Gunduz:ITW:07}. In \cite{Kramer:2007}, the results of \cite{Tuncel:IT:06} are extended to broadcast channels with degraded message sets in which the receivers have access to parts of the underlying messages. Availability of messages or message parts at the receivers of broadcast channels from the channel coding perspective is studied also in \cite{Wu:2007, Xie:2007, Xue:2007}. In \cite{Kang:ISIT:08}, broadcasting a pair of correlated sources with correlated receiver side information is studied.

The interference channel with correlated sources is considered in \cite{Salehi:1993}, and a sufficient condition for reliable transmission is given. In \cite{Gunduz:IT:08}, an interference channel with independent sources, in which each receiver has access to side information correlated with the interfering transmitter's source, is considered. Necessary and sufficient conditions for this setup are characterized under the strong source-channel interference conditions, which generalize the usual strong interference conditions by considering correlated side information as well. The result of \cite{Gunduz:IT:08} shows that interference cancellation is optimal even when the underlying channel interference is not strong, as long as the overall source-channel interference is.

In this paper, we extend the scenario studied in \cite{Gunduz:IT:08} to more general interference channels. We first consider the case in which each receiver has side information correlated with the source sequence it wants to decode. We prove the optimality of source-channel code separation in this situation; that is, the optimal performance can be achieved by first compressing each of the sources using Slepian-Wolf coding with respect to the correlated receiver side information, and then transmitting the compressed bits over the channel using an optimal interference channel code.

Next, we consider the scenario in which each receiver has side information correlated with the interfering transmitter's source. As an example of such a model and to illustrate the benefits of side information about the interfering source, consider the extreme case in which each receiver has access to the message of the interfering transmitter. Note that this setup is equivalent to the restricted two-way channel model of Shannon, whose capacity is characterized in \cite{Shannon:1961}. In this case, each receiver can excise the interference from the undesired transmitter, since its message is exactly known at the receiver. Here, we consider the more general case of arbitrary correlation between the receiver side information and the interfering source, and propose a joint source-channel coding scheme similar to that of Han and Kobayashi \cite{Han:1981} taking the side information into account. Later, we consider the case in which the side information is a deterministic function of the interfering source, and show that source-channel code separation is again optimal. Finally, we consider a special class of interference channels called Z-interference channels, in which only one receiver faces interference. Further focusing on a special class of Z-interference channels satisfying certain conditions (which will be stated later), and the case in which the side information is a deterministic function of the interfering source, we are able to characterize necessary and sufficient conditions for reliable transmission in the form of single letter expressions. This setting also constitutes an example for which the general sufficiency conditions we provide are also necessary, proving their tightness for certain special cases.

The rest of the paper is organized as follows. In Section \ref{system_model} we present the system model.  In Section \ref{side_desired} we prove the optimality of source-channel code separation when the side information is correlated with the desired source. The case in which the side information is correlated with the interfering source is considered in Section \ref{side_interference}. In Section \ref{sufficient_conditions}, we provide sufficient conditions for reliable transmission, while in Section \ref{sepdet}, we prove the optimality of source-channel code separation when the side information is a deterministic function of the interfering source. In Section \ref{special} we show that, for a special source and channel model, the sufficient conditions for reliable transmission proposed in Section \ref{sufficient_conditions} are also necessary, and hence we give a single letter characterization of the necessary and sufficient conditions for this model. In Section \ref{s:degmsg_Z} we characterize the capacity region of a class of Z-channels with degraded message sets. This is followed by conclusions in Section \ref{conclusions}.

\section{System Model} \label{system_model}

An interference channel is composed of two transmitter-receiver pairs. The underlying discrete memoryless channel is characterized by the transition probability $p(y_1, y_2|x_1,x_2)$ from finite input alphabet $\mathcal{X}_1 \times \mathcal{X}_2$ to finite output alphabet $\mathcal{Y}_1 \times \mathcal{Y}_2$. Transmitter $k$ has access to the source sequence $\{U_{k,i}\}_{i=1}^\infty$, $k=1,2$. Consider side information sequences $\{V_{k,i}\}_{i=1}^\infty$, where the source and the side information sequences are independent and identically distributed (i.i.d.) and are drawn according to joint distribution $p(u_1, v_1) p(u_2, v_2)$ over a finite alphabet $\mathcal{U}_1 \times \mathcal{V}_1 \times \mathcal{U}_2 \times \mathcal{V}_2$; that is, the two source-side information pairs are independent of each other.

For $k=1,2$, Transmitter $k$ observes $U_k^n$ and wishes to transmit it noiselessly to Receiver $k$ over $n$ uses of the channel\footnote{Here we use the notation $U_k^n=(U_{k,1}, \ldots, U_{k,n})$, and similar notation for other length-$n$ sequences.}. The encoding function at Transmitter $k$ is
\[f_k^n: \mathcal{U}_k^n \rightarrow \mathcal{X}_k^n. \]

We assume that the side information $V_{\pi(k)}^n$ is available at receiver $k$, where $\pi(\cdot)$ is a permutation of $\{1,2\}$. Depending on the scenario, we will specify whether the side information is correlated with the desired source or with the interfering source.

The decoding function at receiver $k$ reconstructs its estimate $\hat{U}_k$ from its channel output and side information vector using the decoding function
\[g_k^n : \mathcal{Y}_k^n \times \mathcal{V}_{\pi(k)}^n \rightarrow \mathcal{U}_k^n. \]

The probability of error for this system is defined as
\begin{eqnarray}\label{prob_error}
P_e^n &=& \text{Pr}\{(U_1^n, U_2^n) \neq (\hat{U}_1^n, \hat{U}_2^n)\}, \nonumber \\
   &=& \sum_{(u_1^n, u_2^n) \in \mathcal{U}_1^n\times \mathcal{U}_2^n} p(u_1^n, u_2^n) P\left\{ (\hat{U}_1^n, \hat{U}_2^n) \neq (u_1^n, u_2^n) \big| (U_1^n,U_2^n) = (u_1^n,u_2^n)\right\}. \nonumber
\end{eqnarray}

\begin{Def}
We say that a source pair $(U_1, U_2)$ can be reliably transmitted over a given interference channel if there exist a sequence of encoders and decoders $(f_1^n, f_2^n, g_1^n, g_2^n)$ such that $P_e^n \rightarrow 0$ as $n \rightarrow \infty$.
\end{Def}

In the following sections, we consider two cases in particular. In the first case, each receiver has side information correlated with its desired source, i.e., $\pi(k) =k$, $k=1,2$. In the second case, each receiver has side information correlated with the interfering source, i.e., $\pi(1) = 2$ and $\pi(2)=1$. In both cases, we want to exploit the availability of correlated side information at the receivers. In the first case, each transmitter needs to transmit less information to its intended receiver due to the availability of correlated side information. In the latter case, the side information is used to mitigate the effects of interference.

For notational convenience, we drop the subscripts on probability
distributions unless the arguments of the distributions are not lower case
versions of the corresponding random variables.

\begin{figure*}
\centering
\psfrag{X1}{$X_1$}\psfrag{X2}{$X_2$}
\psfrag{Y1}{$Y_1$}\psfrag{Y2}{$Y_2$}
\psfrag{U1}{$U_1$}\psfrag{U2}{$U_2$}
\psfrag{hU1}{$\hat{U}_1$}\psfrag{hU2}{$\hat{U}_2$}
\psfrag{V1}{$V_1$}\psfrag{V2}{$V_2$}
\psfrag{Tx1}{$\mathrm{Tx}_1$} \psfrag{Tx2}{$\mathrm{Tx}_2$}
\psfrag{Rx1}{$\mathrm{Rx}_1$} \psfrag{Rx2}{$\mathrm{Rx}_2$}
\psfrag{p1}{$p(u_1, v_1)$}\psfrag{p2}{$p(u_2, v_2)$}
\psfrag{pxy}{$p(y_1, y_2|x_1, x_2)$}
\includegraphics[width=5in]{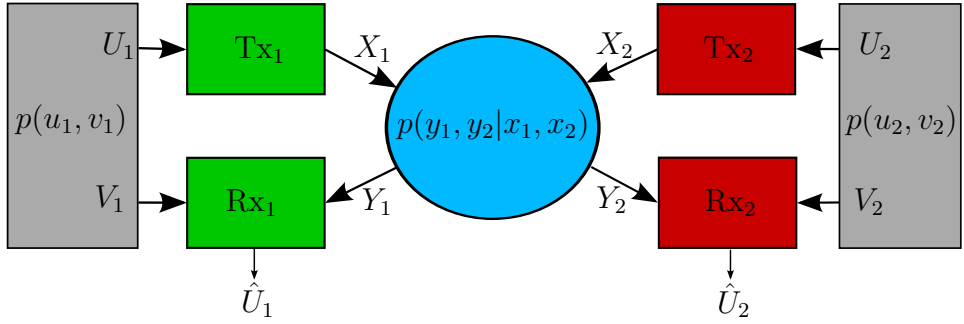}
\caption{Interference channel model in which the receivers have access to side information correlated with the source they want to receive.} \label{f:system_model}
\end{figure*}

\section{Side information correlated with the desired source} \label{side_desired}

In this section, we consider an interference channel in which each receiver has side information correlated with the source it wants to decode, i.e., receiver $k$ has access to side information $V_k$ (see Fig. \ref{f:system_model}). For this special case, we prove that the source-channel separation theorem applies; that is, it is optimal for the transmitters first to apply Slepian-Wolf source coding to compress their sources conditioned on the side information at the corresponding receiver, and then to transmit the compressed bits over the channel using an optimal interference channel code. Note that, in the general case, we do not have a single-letter characterization of the capacity region of the interference channel, yet we can still prove the optimality of source-channel code separation. In the proof, we use the $n$-letter expression for the capacity region, which was also used in \cite{Gunduz:DCC:07} to prove the optimality of source-channel code separation for a multiple access channel with receiver side information and feedback. The main result of this section is the following theorem.

\begin{theorem} \label{ownSideinfo}
Sources $U_1$ and $U_2$ can be transmitted reliably to their respective receivers over the discrete memoryless interference channel $p(y_1,y_2|x_1,x_2)$ with side information $V_k$ at receiver $k$, $k=1,2$, if
\begin{align}
(H(U_1|V_1), H(U_2|V_2)) \in int( \mathcal{C}) \label{ownSideagain}
\end{align}
where $int(\cdot)$ denotes the \emph{interior}, and $\mathcal{C}$ denotes the capacity region of the underlying interference channel.

Conversely, if $(H(U_1|V_1), H(U_2|V_2)) \notin \mathcal{C}$, then sources $U_1$ and $U_2$ cannot be transmitted reliably.
\end{theorem}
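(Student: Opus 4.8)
The plan is to establish the direct part by source--channel separation and the converse by combining Fano's inequality with the ($n$-letter) characterization of the interference-channel capacity region,
$\mathcal{C} = \mathrm{cl}\!\left(\bigcup_{n\ge 1}\ \bigcup_{p(x_1^n)p(x_2^n)}\left\{(R_1,R_2): R_k \le \tfrac{1}{n} I(X_k^n;Y_k^n),\ k=1,2\right\}\right)$,
the same tool used in \cite{Gunduz:DCC:07}; here $\mathrm{cl}(\cdot)$ denotes closure, and the region decouples into two mutual-information constraints evaluated at a product input distribution precisely because each receiver decodes only its own message.

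For the direct part, suppose $(H(U_1|V_1),H(U_2|V_2))\in int(\mathcal{C})$. Then there is a $\gamma>0$ with $(H(U_1|V_1)+2\gamma,\,H(U_2|V_2)+2\gamma)\in\mathcal{C}$. Transmitter $k$ first Slepian-Wolf compresses $U_k^n$ into $2^{nR_k}$ bins with $R_k=H(U_k|V_k)+\gamma$; since $R_k>H(U_k|V_k)$, receiver $k$ can recover $U_k^n$ from the bin index and $V_k^n$ with error probability vanishing in $n$. The two bin indices are then conveyed by an interference-channel code of blocklength $n$ at the rate pair $(R_1,R_2)$, which exists with vanishing error since $(R_1,R_2)$ lies componentwise strictly below the point $(H(U_1|V_1)+2\gamma,H(U_2|V_2)+2\gamma)\in\mathcal{C}$. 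A union bound over the channel-decoding error event and the two Slepian-Wolf decoding error events gives $P_e^n\to 0$.

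For the converse I prove the contrapositive: if $(U_1,U_2)$ can be transmitted reliably, then $(H(U_1|V_1),H(U_2|V_2))\in\mathcal{C}$. Given encoders and decoders with $P_e^n\to 0$, the individual error probabilities $\Pr\{\hat U_k^n\ne U_k^n\}$ also vanish, so Fano's inequality yields $H(U_k^n\mid Y_k^n,V_k^n)\le H(U_k^n\mid\hat U_k^n)\le n\epsilon_n$ with $\epsilon_n\to 0$, where $\hat U_k^n=g_k^n(Y_k^n,V_k^n)$. Since the sources are i.i.d.,
\begin{align}
nH(U_k|V_k) &= H(U_k^n\mid V_k^n) = I(U_k^n;Y_k^n\mid V_k^n)+H(U_k^n\mid Y_k^n,V_k^n) \nonumber\\
&\le I(U_k^n;Y_k^n\mid V_k^n)+n\epsilon_n . \nonumber
\end{align}
The crucial step is $I(U_k^n;Y_k^n\mid V_k^n)\le I(X_k^n;Y_k^n)$: using that $(U_j,V_j)$ is independent of $(U_k,V_k)$ for $j\ne k$, together with $X_k^n=f_k^n(U_k^n)$ and $X_j^n=f_j^n(U_j^n)$, one verifies the Markov chains $V_k^n-U_k^n-Y_k^n$ and $U_k^n-X_k^n-Y_k^n$, so that $H(Y_k^n\mid V_k^n)\le H(Y_k^n)$ and $H(Y_k^n\mid U_k^n,V_k^n)=H(Y_k^n\mid U_k^n)=H(Y_k^n\mid X_k^n)$. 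The same independence forces the induced input distribution to factor as $p(x_1^n)p(x_2^n)$. Hence $(H(U_1|V_1)-\epsilon_n,\,H(U_2|V_2)-\epsilon_n)$ belongs to the $n$-th set in the union above, and therefore to $\mathcal{C}$; letting $n\to\infty$ and using that $\mathcal{C}$ is closed gives $(H(U_1|V_1),H(U_2|V_2))\in\mathcal{C}$.

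The main obstacle is the converse, specifically the reduction $I(U_k^n;Y_k^n\mid V_k^n)\le I(X_k^n;Y_k^n)$ under a product input distribution: this is exactly where the assumption that the two source/side-information pairs are independent is used, and it is what upgrades an $n$-letter outer bound to membership in the actual capacity region $\mathcal{C}$, even though $\mathcal{C}$ admits no known single-letter description. By comparison, matching block lengths and rates in the Slepian-Wolf/interference-channel concatenation and passing from the joint to the individual error probabilities are routine once the interior hypothesis supplies the needed slack.
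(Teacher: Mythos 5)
Your proposal is correct and follows essentially the same route as the paper: achievability by Slepian--Wolf coding conditioned on the receiver side information followed by an optimal interference-channel code, and a converse that combines Fano's inequality with the $n$-letter (limiting) characterization of $\mathcal{C}$ under product inputs, using exactly the Markov chains $V_k^n \rightarrow U_k^n \rightarrow Y_k^n$ and $U_k^n \rightarrow X_k^n \rightarrow Y_k^n$ to get $nH(U_k|V_k) \le I(X_k^n;Y_k^n) + n\epsilon_n$ and then invoking closedness of $\mathcal{C}$.
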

\begin{proof}
A proof of Theorem \ref{ownSideinfo} is given in Appendix \ref{proof_ownSideinfo}.
\end{proof}

\section{Side information correlated with the interfering source} \label{side_interference}

In this section we consider the case in which Receiver 1 has access to $V_2$ while Receiver 2 has access to $V_1$, i.e., each receiver has side information about the interfering transmitter's source (see Fig. \ref{f:system_model2}). We investigate how the side information about the interference helps in decoding the desired information.

\begin{figure*}
\centering
\psfrag{X1}{$X_1$}\psfrag{X2}{$X_2$}
\psfrag{Y1}{$Y_1$}\psfrag{Y2}{$Y_2$}
\psfrag{U1}{$U_1$}\psfrag{U2}{$U_2$}
\psfrag{hU1}{$\hat{U}_1$}\psfrag{hU2}{$\hat{U}_2$}
\psfrag{V1}{$V_1$}\psfrag{V2}{$V_2$}
\psfrag{Tx1}{$\mathrm{Tx}_1$} \psfrag{Tx2}{$\mathrm{Tx}_2$}
\psfrag{Rx1}{$\mathrm{Rx}_1$} \psfrag{Rx2}{$\mathrm{Rx}_2$}
\psfrag{p1}{$p(u_1, v_1)$}\psfrag{p2}{$p(u_2, v_2)$}
\psfrag{pxy}{$p(y_1, y_2|x_1, x_2)$}
\includegraphics[width=5in]{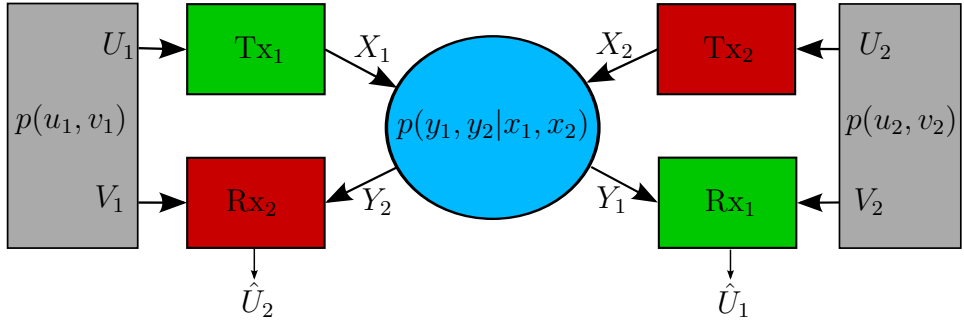}
\caption{Interference channel model in which the receivers have access to side information correlated with the source of the interfering transmitter.} \label{f:system_model2}
\end{figure*}

\subsection{Sufficient Conditions for Reliable Transmission} \label{sufficient_conditions}

We first provide sufficient conditions for reliable transmission of the sources. In the spirit of the Han-Kobayashi scheme for the classical interference channel, we propose a joint source-channel coding scheme that requires the receivers to decode part of the interference with the help of their side information. In the Han-Kobayashi scheme, each transmitter splits its message into two pieces to allow the non-intended receiver to decode part of the interference. In our scheme, each transmitter enables a quantized version of its source to be decoded by both receivers, where the unintended receiver uses its correlated side information as well as the channel output to decode the interference corresponding to this quantized part. Sufficient conditions for reliable transmission in this setup are given in the following theorem.

\begin{theorem} \label{wish}
Sources $U_1$ and $U_2$ can be transmitted reliably over the interference channel $p(y_1,y_2|x_1,x_2)$ with side information $V_1$ at Receiver 2 and $V_2$ at Receiver 1 if there exist random variables $W_1$ and $W_2$ such that
\begin{align}
H(U_1)  <& I(X_1;V_2,Y_1|W_2, Q), \\
H(U_2) < & I(X_2;V_1,Y_2|W_1, Q), \\
H(U_1) < & I(W_2, X_1;V_2,Y_1|Q)-I(U_2;W_2|Q), \\
H(U_2) < & I(W_1, X_2;V_1,Y_2| Q)-I(U_1;W_1|Q), \\
H(U_1)+H(U_2) < & I(X_1;V_2,Y_1|W_1,W_2, Q)+  I(W_1, X_2;V_1,Y_2|Q),  \label{r1}\\
H(U_1)+H(U_2) < & I(X_2;V_1,Y_2|W_1,W_2,Q)+I(W_2, X_1;V_2,Y_1|Q), \\
H(U_1)+H(U_2) < & I(W_1, X_2;V_1, Y_2|W_2,Q) +I(W_2, X_1;V_2, Y_1|W_1,Q), \label{r2}\\
H(U_1)+H(U_2) < & I(W_2,X_1;V_2,Y_1|Q) +I(W_1,X_2;V_1,Y_2|W_2,Q)-I(U_1;W_1|Q), \\
H(U_1)+H(U_2)  < & I(W_1,X_2;V_1,Y_2|Q)+I(W_2,X_1;V_2,Y_1|W_1,Q)-I(U_2;W_2|Q),\\
2H(U_1)+H(U_2) < & I(W_2, X_1;V_2,Y_1|Q) + I(X_1;V_2,Y_1|W_1,W_2,Q) + I(W_1, X_2;V_1, Y_2|W_2,Q), \\
H(U_1)+2H(U_2) < &  I(W_1, X_2;V_1,Y_2|Q)+I(X_2;V_1,Y_2|W_1,W_2,Q) +I(W_2, X_1;V_2, Y_1|W_1,Q),
\end{align}
for some $p(q)$, $p(w_1, x_1|u_1,q)$, and $p(w_2,x_2|u_2,q)$, where the entropies and mutual information terms are evaluated using the joint distribution
\begin{align}
p(q, u_1,v_1,u_2,v_2,w_1,w_2,x_1,x_2,y_1,y_2)= p(q) p(u_1,v_1)p(u_2,v_2)p(w_1,x_1|u_1,q) & p(w_2,x_2|u_2,q) \nonumber \\
& p(y_1, y_2|x_1,x_2).\label{dis}
\end{align}
\end{theorem}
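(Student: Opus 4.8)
The plan is to use a joint source--channel version of the Han--Kobayashi scheme, with $W_1^n$ and $W_2^n$ playing the role of quantized ``common'' descriptions that each non‑intended receiver decodes with the help of its side information. Fix $p(q)$, $p(w_1,x_1|u_1,q)$, $p(w_2,x_2|u_2,q)$ as in \eqref{dis} and reveal $Q^n\sim\prod_i p(q_i)$ to all terminals. Transmitter $k$ generates $2^{n\rho_k}$ cloud centres $W_k^n(s_k)$ i.i.d.\ $\sim\prod_i p(w_k|q_i)$ and, superimposed on each cloud, $2^{n\tau_k}$ satellites $X_k^n(s_k,t_k)$ i.i.d.\ $\sim\prod_i p(x_k|W_k(s_k)_i,q_i)$, with $\rho_k+\tau_k=H(U_k)+\epsilon$. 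On observing $U_k^n$ the transmitter ``quantizes'' it, i.e., picks a pair $(s_k,t_k)$ with $(U_k^n,W_k^n(s_k),X_k^n(s_k,t_k),Q^n)$ jointly typical under $p(q)p(u_k)p(w_k,x_k|u_k,q)$, and sends $X_k^n(s_k,t_k)$; the codebooks depend on $U_k^n$ only, the correlation between $V_k^n$ and $U_k^n$ being exploited solely at the receivers. By the covering lemma this quantization succeeds w.h.p.\ provided $\rho_k>I(U_k;W_k|Q)$ and $\rho_k+\tau_k>I(U_k;W_k,X_k|Q)$, which is consistent with $\rho_k+\tau_k=H(U_k)+\epsilon$ since $I(U_k;W_k,X_k|Q)\le H(U_k)$; and because $\rho_k+\tau_k>H(U_k)$, a Slepian--Wolf‑type argument shows that, with high probability, the induced map $U_k^n\mapsto(s_k,t_k)$ is one‑to‑one on the typical set, so $U_k^n$ can be recovered from the decoded indices.

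For decoding, Receiver $1$ searches jointly for $(\hat s_1,\hat t_1)$ and a consistent interferer cloud index $\hat s_2$ making $(W_1^n(\hat s_1),X_1^n(\hat s_1,\hat t_1),W_2^n(\hat s_2),Y_1^n,V_2^n,Q^n)$ jointly typical, and then outputs the source assigned to $(\hat s_1,\hat t_1)$; Receiver $2$ is symmetric using $(Y_2^n,V_1^n)$. An error at Receiver $1$ occurs only when $(\hat s_1,\hat t_1)\ne(s_1,t_1)$, and one analyzes it by conditioning on which subset of $W_1^n,X_1^n,W_2^n$ are ``fresh'' (decoded to an index different from the true one, hence independent of the true observation). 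Applying the packing lemma to each configuration and using the Markov structure of \eqref{dis} --- given $(X_k,Q)$ the transmitter's own cloud $W_k$ is independent of what Receiver $k$ observes, so $W_k$ drops out of any mutual‑information term in which it appears jointly with $X_k$, while the two users' variables are independent given $Q$ --- one obtains a finite list of inequalities in $(\rho_1,\tau_1,\rho_2,\tau_2)$, with $V_2$ (resp.\ $V_1$) entering only bundled with $Y_1$ (resp.\ $Y_2$), exactly as in \eqref{r1}--\eqref{r2}. Substituting $\tau_k=H(U_k)-\rho_k$ leaves the two free parameters $\rho_1\in[I(U_1;W_1|Q),H(U_1)]$ and $\rho_2\in[I(U_2;W_2|Q),H(U_2)]$.

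The final step is a Fourier--Motzkin elimination of $\rho_1$ and $\rho_2$ from this system. The seven Han--Kobayashi‑type inequalities come out as in the classical channel‑coding problem; the new feature is that the lower endpoints $\rho_k\ge I(U_k;W_k|Q)$ are now strictly positive and therefore active, and pairing them with the appropriate upper endpoints produces the four extra inequalities carrying the $-I(U_1;W_1|Q)$ and $-I(U_2;W_2|Q)$ terms. Verifying that the remaining Fourier--Motzkin outputs are redundant or trivially true then leaves precisely the eleven inequalities of the theorem, with the rate pair identified as $(H(U_1),H(U_2))$.

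The step I expect to be the main obstacle is the recoverability requirement: getting $U_k^n$ back from the decoded codeword indices while keeping the total codebook rate at $H(U_k)$ is exactly what forces the ``quantize, then superimpose a satellite'' structure together with the one‑to‑one covering argument, and it is the mechanism by which the source entropies (rather than arbitrary rates) enter the bounds. The remaining work --- matching each of the many decoding‑error events to the correct conditional mutual information, and then carrying out the Fourier--Motzkin elimination so that it reproduces exactly these eleven inequalities --- is routine but lengthy.
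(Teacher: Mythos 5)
Your proposal is sound and, after the dust settles, produces exactly the same linear system and Fourier--Motzkin step as the paper, but the construction of the private layer is genuinely different. The paper only builds the cloud codebook digitally (size $L_k$ with $\frac{1}{n}\log L_k>I(U_k;W_k|Q)$); the channel input is then drawn once per source sequence, $x_k^n(u_k^n,w_k^n(u_k^n))\sim p(x_k|u_k,w_k,q)$, and each decoder searches directly over pairs $(u_1^n,w_2^n)$, so no map from indices back to sources is ever needed. Its error analysis bounds the expected sizes of four candidate sets, split according to whether the wrong $u_1^n$ shares the true cloud and whether the interferer's cloud index is wrong, which yields precisely your inequalities in $(\rho_1,\rho_2)$ once you substitute $\tau_k=H(U_k)-\rho_k$ (with $\frac{1}{n}\log L_k$ playing the role of $\rho_k$), and the elimination then gives the same eleven conditions. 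What your digital two-layer scheme buys is a cleaner covering/packing/elimination decomposition; what it costs is two extra arguments that the paper's indexing-by-source trick avoids: (i) the quantization step needs a superposition (joint cloud-plus-satellite) covering lemma --- your conditions $\rho_k>I(U_k;W_k|Q)$ and $\rho_k+\tau_k>I(U_k;W_k,X_k|Q)$ are sufficient only because you search jointly over $(s_k,t_k)$; covering layer by layer would impose the additional constraint $\tau_k>I(U_k;X_k|W_k,Q)$, i.e.\ an upper bound on $\rho_k$, and the Fourier--Motzkin output would then not reduce to the stated eleven inequalities --- and (ii) recovering $U_k^n$ from the decoded pair requires the one-to-one/collision argument you only sketch (the expected number of other typical sequences assigned the same pair is of order $2^{n(H(U_k)-\rho_k-\tau_k)}=2^{-n\epsilon}$), together with a Markov-lemma step ensuring the true tuple including $V_1^n,V_2^n$ is jointly typical so that the correct indices survive the typicality test. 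With those two lemmas made explicit, your route is a valid alternative proof of the theorem.
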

\begin{proof}
A proof of Theorem \ref{wish} is given in Appendix \ref{proofsufficient}.
\end{proof}

We remark here that the achievability scheme in the proof of Theorem \ref{wish} uses joint source-channel coding and hence, similarly to \cite{Cover:1980} and \cite{Han:1987}, the expressions involve joint distribution of the source and channel variables, which potentially increases the achievable rate region by enlarging the set of possible joint distributions.
Below in Corollary \ref{c:separate}, we provide a sufficient condition for reliable transmission based on separate source and channel codes in the spirit of ``operational separation'' as in \cite{Tuncel:IT:06}, \cite{Gunduz:IT:08}, which can be obtained as a special case of Theorem \ref{wish}. Note that operational separation is different from the classical (``informational'') separation, in which each source is first assigned to an index and then these indices are transmitted using an optimal channel code for the underlying channel. Operational separation corresponds to separation of the source and the channel variables as in Corollary \ref{c:separate} without using the optimal source or the channel codes (see \cite{Gunduz:IT:08} for further details and examples).

\begin{corollary} \label{c:separate}
Sources $U_1$ and $U_2$ can be transmitted reliably over the interference channel $p(y_1,y_2|x_1,x_2)$ with side information $V_1$ at Receiver 2 and $V_2$ at Receiver 1 if there exist random variables $\overline{W}_1, \widetilde{W}_1$ and $\overline{W}_2, \widetilde{W}_2$ such that
\begin{align}
H(U_1)  <& I(X_1; Y_1|\widetilde{W}_2, Q),   \\
H(U_1) + I(\overline{W}_2; U_2 |V_2, Q) < & I(X_1, \widetilde{W}_2; Y_1|Q),  \\
H(U_2) < & I(X_2; Y_2|\widetilde{W}_1, Q),  \\
H(U_2) + I(\overline{W}_1; U_1 |V_1,Q) < & I(X_2, \widetilde{W}_1; Y_2| Q),   \\
H(U_1)+H(U_2) - I(\overline{W}_1;V_1|Q) <&  I(X_1; Y_1|\widetilde{W}_1,\widetilde{W}_2, Q) +  I(\widetilde{W}_1, X_2; Y_2|Q),   \\
H(U_1)+H(U_2) - I(\overline{W}_2;V_2|Q) < & I(X_2; Y_2|\widetilde{W}_1,\widetilde{W}_2, Q) +  I(\widetilde{W}_2, X_1; Y_1|Q),  \\
H(U_1)+H(U_2) - I(\overline{W}_1;V_1|Q) - I(\overline{W}_2;V_2|Q) < & I(\widetilde{W}_1, X_2; Y_2|\widetilde{W}_2,Q) +I(\widetilde{W}_2, X_1; Y_1|\widetilde{W}_1,Q), \\
H(U_1)+H(U_2) + I(\overline{W}_1; U_1|V_1,Q) - I(\overline{W}_2;V_2|Q) < & I(\widetilde{W}_2, X_1; Y_1|Q) + I(\widetilde{W}_1,X_2; Y_2|\widetilde{W}_2,Q),  \\
H(U_1)+H(U_2) + I(\overline{W}_2; U_2|V_2,Q) - I(\overline{W}_1;V_1|Q) < & I(\widetilde{W}_1, X_2; Y_2|Q) + I(\widetilde{W}_2,X_1; Y_1|\widetilde{W}_1,Q),  \\
2H(U_1)+H(U_2)- I(\overline{W}_1;V_1|Q) - I(\overline{W}_2;V_2|Q)  < & I(\widetilde{W}_2, X_1; Y_1|Q) + I(X_1; Y_1|\widetilde{W}_1,\widetilde{W}_2,Q) \nonumber \\
& ~~~~~ + I(\widetilde{W}_1, X_2; Y_2|\widetilde{W}_2,Q) \mbox{ and }
\end{align}
\begin{align}
H(U_1)+2H(U_2)- I(\overline{W}_1;V_1|Q) - I(\overline{W}_2;V_2|Q)  < & I(\widetilde{W}_1, X_2; Y_2|Q) + I(X_2; Y_2|\widetilde{W}_2,\widetilde{W}_1,Q) \nonumber \\
& ~~~~~ + I(\widetilde{W}_2, X_1; Y_1|\widetilde{W}_1,Q),
\end{align}
for some $p(q)$, $p(\overline{w}_1|u_1,q)$, $p(\overline{w}_2|u_2,q)$, $p(\widetilde{w}_1,x_1|q)$ and $p(\widetilde{w}_2,x_2|q)$, where the entropies and mutual information terms are evaluated using joint distribution
\begin{align}
p(q, u_1,v_1,u_2,v_2,\overline{w}_1, \overline{w}_2,\widetilde{w}_1,\widetilde{w}_2, x_1,x_2,y_1,y_2) = p(q) & p(u_1,v_1) p(\overline{w}_1|u_1,q)p(u_2,v_2) \nonumber\\
& p(\overline{w}_2|u_2,q)  p(\widetilde{w}_1, x_1|q)p(\widetilde{w}_2, x_2|q) p(y_1, y_2|x_1,x_2).\label{dist:sep}
\end{align}
\end{corollary}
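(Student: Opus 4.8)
The plan is to derive Corollary \ref{c:separate} directly from Theorem \ref{wish} by a suitable specialization of the auxiliary random variables, implementing an ``operational separation'' between the source variables and the channel variables. The key idea is to split each auxiliary variable $W_k$ of Theorem \ref{wish} into a pair: a source-side variable $\overline{W}_k$ that depends only on $(U_k,Q)$ and carries the quantized description of the source, and a channel-side variable $\widetilde{W}_k$ that depends only on $Q$ (together with $X_k$) and carries the common-message codeword. Concretely, I would set $W_k = (\overline{W}_k,\widetilde{W}_k)$ in Theorem \ref{wish}, with the Markov structure enforced by the factorization $p(\overline w_k|u_k,q)\,p(\widetilde w_k,x_k|q)$ of (\ref{dist:sep}), and with the additional independence $X_k \perp U_k \mid (\widetilde W_k, Q)$ that this factorization entails. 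The quantity $I(\overline W_k;U_k|V_k,Q)$ then plays the role of the Slepian--Wolf / Wyner--Ziv binning rate needed to convey $\overline W_k$ to the unintended receiver that holds $V_k$, while $I(\overline W_k;V_k|Q)$ is the rate saved by exploiting that side information.

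The main steps are as follows. First I would fix a distribution of the form (\ref{dist:sep}) and define $W_k := (\overline W_k,\widetilde W_k)$, checking that the resulting joint law is a legitimate choice in Theorem \ref{wish}, i.e.\ it factors as $p(q)p(u_1,v_1)p(u_2,v_2)p(w_1,x_1|u_1,q)p(w_2,x_2|u_2,q)p(y_1,y_2|x_1,x_2)$ — this holds because $p(w_k,x_k|u_k,q)=p(\overline w_k|u_k,q)p(\widetilde w_k,x_k|q)$ under (\ref{dist:sep}). Second, for each of the eleven inequalities of Theorem \ref{wish} I would substitute $W_k\mapsto(\overline W_k,\widetilde W_k)$ and simplify the mutual-information terms using the two structural facts: (a) since $X_k$ and $\widetilde W_k$ are conditionally independent of $U_k$ (hence of $\overline W_k$) given $Q$, and the channel output $Y_j$ depends on $U$'s only through $X$'s, terms like $I(W_2,X_1;V_2,Y_1|\cdot)$ collapse — the $V_2$ contribution and the $\overline W_2$ conditioning reorganize into the source terms $I(\overline W_2;U_2|V_2,Q)$, $I(\overline W_2;V_2|Q)$ appearing on the left-hand sides of the corollary; (b) $\widetilde W_k,X_k$ form the actual channel codewords so $I(\widetilde W_j,X_i;Y_j|\cdots)$ survives as the pure channel mutual-information terms on the right-hand sides. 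Doing this bookkeeping term by term, using the chain rule and the Markov relations $\overline W_k - U_k - V_k$ and $(\overline W_k,U_k,V_k) - (Q,\widetilde W_k,X_k,\widetilde W_j,X_j) - (Y_1,Y_2)$, should turn each inequality of Theorem \ref{wish} into the corresponding inequality of Corollary \ref{c:separate}.

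I would carry out the reduction on the representative cases first — the single-rate bound $H(U_1)<I(X_1;V_2,Y_1|W_2,Q)$ becoming $H(U_1)<I(X_1;Y_1|\widetilde W_2,Q)$ (the $V_2$ drops out because, given $(\widetilde W_2,Q)$ and $X_1$, $Y_1$ is independent of everything source-related including $V_2$), and the mixed bound $H(U_1)<I(W_2,X_1;V_2,Y_1|Q)-I(U_2;W_2|Q)$ becoming $H(U_1)+I(\overline W_2;U_2|V_2,Q)<I(X_1,\widetilde W_2;Y_1|Q)$ — since these exhibit both mechanisms (pure channel term, and the side-information-aided binning penalty $I(\overline W_2;U_2|V_2,Q)$ arising from combining $-I(U_2;W_2|Q)$ with the $I(W_2;V_2|Q)$-type piece extracted from $I(W_2,X_1;V_2,Y_1|Q)$). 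The sum-rate and $2H+H$ bounds follow by the same manipulations applied to each additive component. The main obstacle I anticipate is purely the combinatorial bookkeeping: correctly tracking, in each of the eleven inequalities, how the $V_j$ arguments and the $\overline W_j$ conditionings distribute across the mutual-information chain-rule expansions so that the leftover source terms land exactly as the $I(\overline W_j;U_j|V_j,Q)$ and $-I(\overline W_j;V_j|Q)$ corrections on the left-hand sides — there is no conceptual difficulty, but the identities $I(W,X;V,Y|\cdot)=I(X;Y|\cdot)+I(\overline W;U|V,\cdot)-I(\overline W;V|\cdot)$ (in the appropriate conditioned form) must be verified carefully for each variant, and one must confirm that every cross term cancels so that no spurious conditioning on source variables remains in the channel terms.
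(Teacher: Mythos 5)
Your proposal is correct and takes essentially the same route as the paper, which proves the corollary in one line by setting $W_k=(\overline{W}_k,\widetilde{W}_k)$ with $p(w_k,x_k|u_k,q)=p(\overline{w}_k|u_k,q)p(\widetilde{w}_k,x_k|q)$ in Theorem \ref{wish} and leaving implicit exactly the term-by-term simplification you describe. The only minor imprecision is your displayed identity: the correct collapse is $I(W_2,X_1;V_2,Y_1|Q)=I(\widetilde{W}_2,X_1;Y_1|Q)+I(\overline{W}_2;V_2|Q)$, with $I(\overline{W}_2;U_2|V_2,Q)$ emerging only after combining with $-I(U_2;\overline{W}_2|Q)$ via the Markov chain $\overline{W}_2-U_2-V_2$, as you yourself note in your discussion of the mixed bound.
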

\begin{proof}
Corollary \ref{c:separate} follows directly from Theorem \ref{wish} by letting $W_k=(\overline{W}_k, \widetilde{W}_k)$ and fixing the distributions as $p(w_k,x_k|u_k, q) = p(\overline{w}_k|u_k,q)p(\widetilde{w}_k, x_k|q)$, for $k=1,2$.
\end{proof}
The sufficient conditions in Corollary \ref{c:separate} are looser than those in Theorem \ref{wish}. However, it is not clear whether they are strictly looser.

\begin{rem}
In the special case of no receiver side information, i.e., $V_1=V_2=\emptyset$, by fixing $\overline{W}_1 =\overline{W}_2=\emptyset$, and defining $R_1=H(U_1)$ and $R_2=H(U_2)$, the sufficiency conditions in Corollary \ref{c:separate} boils down to the Han-Kobayashi rate region in the form expressed in \cite[Theorem 2]{Chong:2006}.
\end{rem}

We do not know whether the sufficient conditions for reliable transmission provided in Theorem \ref{wish} are too strong, leading to pessimistic results in general. However, in Section \ref{special}, we show that for some special cases, the sufficient conditions obtained through separate source and channel coding in Corollary \ref{c:separate} are also necessary, which shows that at least for certain special cases, Theorem \ref{wish} is tight.

\subsection{Deterministic Side Information} \label{sepdet}

In this subsection, we focus on the special case in which the side information sequences $V_1$ and $V_2$ are deterministic functions of the sources $U_1$ and $U_2$, respectively, i.e.,
\begin{align}
V_{k,i}=h_k(U_{k,i}), \qquad k=1,2, \quad i=1,2,\cdots \label{deterministic_condition}
\end{align}
for some deterministic functions $h_1$ and $h_2$, or equivalently we have $H(V_k|U_k)=0$ for $k=1,2$.

The main result of this subsection is that when the side information is a deterministic function of the interfering source, the source-channel separation theorem applies; that is, it is optimal to first perform source coding and encode $V_k^n$ into message $W_{ks}$, and the remaining part of $U_k^n$, denoted by $U_k^n|V_k^n$, into message $W_{kp}$, $k=1,2$, and then to transmit these messages optimally over the underlying interference channel $p(y_1,y_2|x_1,x_2)$ with side information $W_{1s}$ at Receiver 2, and side information $W_{2s}$ at Receiver 1.

\begin{figure*}
\centering
\psfrag{X1}{$X_1$}\psfrag{X2}{$X_2$}
\psfrag{Y1}{$Y_1$}\psfrag{Y2}{$Y_2$}
\psfrag{W1p}{$W_{1p}$}\psfrag{W1s}{$W_{1s}$}
\psfrag{W2p}{$W_{2p}$}\psfrag{W2s}{$W_{2s}$}
\psfrag{sW1s}{$W_{1s}$}\psfrag{sW2s}{$W_{2s}$}
\psfrag{hW1p}{$\hat{W}_{1p}$}\psfrag{hW1s}{$\hat{W}_{1s}$}
\psfrag{hW2p}{$\hat{W}_{2p}$}\psfrag{hW2s}{$\hat{W}_{2s}$}
\psfrag{Tx1}{$\mathrm{Transmitter~1}$} \psfrag{Tx2}{$\mathrm{Transmitter~2}$}
\psfrag{Rx1}{$\mathrm{Receiver~1}$} \psfrag{Rx2}{$\mathrm{Receiver~2}$}
\psfrag{pxy}{$p(y_1, y_2|x_1, x_2)$}
\includegraphics[width=5in]{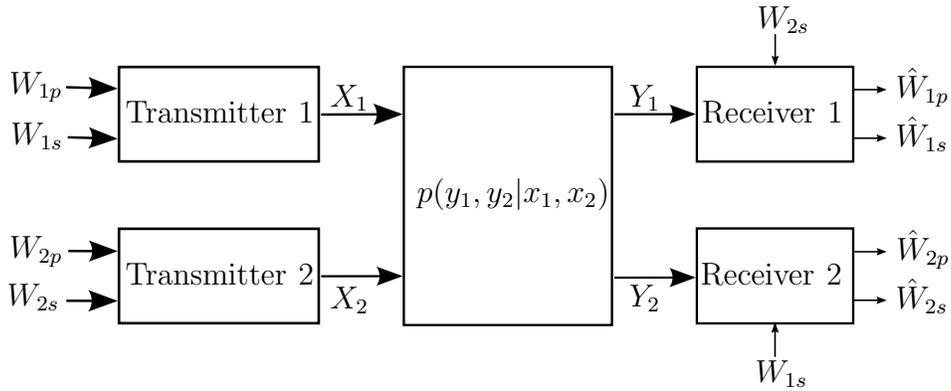}
\caption{Interference channel with message side information at the receivers.} \label{f:message_side}
\end{figure*}

First, we define the capacity region of the interference channel with message side information at the receivers (see Fig. \ref{f:message_side}). In this communication scenario, Transmitter $k$ has two messages $W_{ks}$ and $W_{kp}$, of rates $R_{ks}$ and $R_{kp}$ respectively, to transmit with negligible probability of error to Receiver $k$, $k=1,2$, while Receiver $2$ has access to  $W_{1s}$, and Receiver 1 has access to $W_{2s}$. All messages are independent. A $\left(2^{nR_{1s}}, 2^{nR_{1p}}, 2^{nR_{2s}}, 2^{nR_{2p}},n \right)$ code for this channel consists of two encoding functions,
\begin{align}
f_1^n:& \{1,2,\cdots, 2^{nR_{1s}}\} \times \{1,2,\cdots, 2^{nR_{1p}}\} \rightarrow \mathcal{X}_1^n
\end{align}
and
\begin{align}
f_2^n:& \{1,2,\cdots, 2^{nR_{2s}}\} \times \{1,2,\cdots, 2^{nR_{2p}}\} \rightarrow \mathcal{X}_2^n
\end{align}
and two decoding functions
\begin{align}
g_1^n:& \mathcal{Y}_1^n \times \{1,2,\cdots, 2^{nR_{2s}}\} \rightarrow \{1,2,\cdots, 2^{nR_{1s}}\} \times \{1,2,\cdots, 2^{nR_{1p}}\}
\end{align}
and
\begin{align}
g_2^n:& \mathcal{Y}_2^n \times \{1,2,\cdots, 2^{nR_{1s}}\} \rightarrow \{1,2,\cdots, 2^{nR_{2s}}\} \times \{1,2,\cdots, 2^{nR_{2p}}\}.
\end{align}
The average probability of error for the $\left(2^{nR_{1s}}, 2^{nR_{1p}}, 2^{nR_{2s}}, 2^{nR_{2p}},n \right)$ code is defined as
\begin{align}
P_e^n=\frac{1}{2^{n(R_{1s}+R_{1p}+R_{2s}+R_{2p})}}\sum_{w_{1s}=1}^{2^{nR_{1s}}}&\sum_{w_{1p}=1}^{2^{nR_{1p}}}\sum_{w_{2s}=1}^{2^{nR_{2s}}}\sum_{w_{2p}=1}^{2^{nR_{2p}}} \text{Pr} \{g_1^n(Y_1^n,w_{2s}) \neq (w_{1s}, w_{1p}) \nonumber\\
& \text{ or } g_2^n(Y_2^n,w_{1s}) \neq (w_{2s}, w_{2p})|(w_{1s}, w_{1p}, w_{2s}, w_{2p}) \text{ is sent} \}.
\end{align}

\begin{Def}
A rate quadruplet $(R_{1s}, R_{1p}, R_{2s}, R_{2p})$ is said to be achievable if there exists a sequence of $\left(2^{nR_{1s}}, 2^{nR_{1p}}, 2^{nR_{2s}}, 2^{nR_{2p}},n \right)$ codes for which $P_e^n \rightarrow 0$ as $n \rightarrow\infty$. The capacity region is defined as the closure of the set of achievable rate quadruplets $(R_{1s}, R_{1p}, R_{2s}, R_{2p})$, and is denoted by $\mathcal{C}_I$.
\end{Def}

In order to show the optimality of source-channel code separation, similarly to Theorem \ref{ownSideinfo},  we will use the $n$-letter characterization of $\mathcal{C}_I$ provided in the next lemma. Define $\mathcal{G}^n$ as
\begin{align}
\mathcal{G}^n=\bigg\{\left(R_{1s}, R_{1p}, R_{2s}, R_{2p} \right):  &R_{1p} \leq \frac{1}{n} I(X_1^n;Y_1^n|S_{1s}^n, S_{2s}^n),  R_{1s}+R_{1p} \leq \frac{1}{n} I(X_1^n;Y_1^n|S_{2s}^n), \nonumber \\
& R_{2p} \leq \frac{1}{n} I(X_2^n;Y_2^n|S_{1s}^n, S_{2s}^n), R_{2s}+R_{2p} \leq \frac{1}{n} I(X_2^n;Y_2^n|S_{1s}^n), \nonumber \\
& \hspace{0.65in} \text{for any } p^n(s_{1s}^n)p^n(s_{2s}^n)p^n(x_1^n|s_{1s}^n) p^n(x_2^n|s_{2s}^n) \bigg\}
\end{align}
\begin{lemma} \label{n_letter_message_region}
The capacity region of the interference channel with  message side information $W_{1s}$ at Receiver 2, and message side information $W_{2s}$ at Receiver 1 is
\begin{align}
\mathcal{C}_I=\lim_{n \rightarrow \infty} \quad \mathcal{G}^n \label{ci}
\end{align}
where the limit of the region is as defined in \cite[Theorem 5]{Shannon:1961}.
\end{lemma}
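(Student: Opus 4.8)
The plan is to prove Lemma~\ref{n_letter_message_region} by establishing the two inclusions $\mathcal{C}_I \subseteq \lim_{n\to\infty}\mathcal{G}^n$ (converse) and $\lim_{n\to\infty}\mathcal{G}^n \subseteq \mathcal{C}_I$ (achievability), treating the $n$-letter region exactly as Shannon does in \cite{Shannon:1961} so that the ``$\lim$'' is interpreted in the sense of \cite[Theorem~5]{Shannon:1961}. The key structural observation is that, because Receiver~$2$ knows $W_{1s}$ and Receiver~$1$ knows $W_{2s}$, each receiver effectively faces only the ``private'' uncertainty of the other user and the full uncertainty of its own message. This is why the region $\mathcal{G}^n$ has no sum-rate constraints coupling the two links: conditioned on both side-information sequences $S_{1s}^n$ and $S_{2s}^n$ (which play the role of the transmitted codewords corresponding to $W_{1s}$ and $W_{2s}$, with $p^n(s_{ks}^n)$ the induced input distribution and $p^n(x_k^n|s_{ks}^n)$ the encoder map), each transmitter–receiver pair looks like a point-to-point channel.

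For the converse, I would start from any sequence of $(2^{nR_{1s}},2^{nR_{1p}},2^{nR_{2s}},2^{nR_{2p}},n)$ codes with $P_e^n\to 0$, and apply Fano's inequality at each receiver. At Receiver~$1$, which observes $Y_1^n$ and the side information $W_{2s}$: decoding $(W_{1s},W_{1p})$ gives $n(R_{1s}+R_{1p}) \le I(W_{1s},W_{1p};Y_1^n|W_{2s}) + n\epsilon_n$, and since $X_1^n$ is a function of $(W_{1s},W_{1p})$ this is at most $I(X_1^n;Y_1^n|W_{2s}) + n\epsilon_n$; decoding just $W_{1p}$ after conditioning additionally on $W_{1s}$ gives $nR_{1p} \le I(X_1^n;Y_1^n|W_{1s},W_{2s}) + n\epsilon_n$. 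Symmetrically at Receiver~$2$. Identifying $S_{ks}^n$ with the messages $W_{ks}$ (so the distribution is a product $p^n(s_{1s}^n)p^n(s_{2s}^n)$ because $W_{1s},W_{2s}$ are independent and uniform, and $X_k^n$ is conditionally distributed given $S_{ks}^n$ through the encoder and the other, marginalized-out private message), we see $(R_{1s},R_{1p},R_{2s},R_{2p})$ lies in $\mathcal{G}^n$ up to $\epsilon_n$, hence in the limit region. The achievability direction is a standard random-coding argument: fix $n$, $p^n(s_{1s}^n)$, $p^n(s_{2s}^n)$, $p^n(x_1^n|s_{1s}^n)$, $p^n(x_2^n|s_{2s}^n)$; generate codebooks by superposition (a ``cloud center'' $s_{ks}^n$ per value of $W_{ks}$, and for each cloud center a codeword $x_k^n$ per value of $W_{kp}$), and let Receiver~$1$, knowing $W_{2s}$ hence $s_{2s}^n$, do joint-typicality decoding of $(s_{1s}^n, x_1^n)$ over the super-block of length $n\cdot m$ as $m\to\infty$; the error analysis yields exactly the four inequalities defining $\mathcal{G}^n$ (for the fixed $n$), and taking the union over all product distributions and then the limit over $n$ gives $\mathcal{C}_I$.

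The main obstacle — really a bookkeeping subtlety rather than a deep difficulty — is handling the ``$\lim$'' correctly: $\mathcal{G}^n$ for a single $n$ is not convex and not obviously monotone in $n$, so one must invoke the time-sharing / limiting construction of \cite[Theorem~5]{Shannon:1961} to make sense of $\lim_{n\to\infty}\mathcal{G}^n$ and to run the achievability over super-blocks. I would lean on that theorem as a black box, exactly as is done for Theorem~\ref{ownSideinfo} in Appendix~\ref{proof_ownSideinfo}, and note that the single-letterization is deliberately \emph{not} attempted here because the underlying interference channel has no known single-letter capacity; the point of the $n$-letter form is precisely that it is all we need to later carry through the source–channel separation argument. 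A secondary point to get right in the converse is the identification of the auxiliary inputs: one must verify that conditioning on $W_{1s}$ (respectively $W_{2s}$) and treating it as $S_{1s}^n$ (respectively $S_{2s}^n$) produces a joint distribution of the required product form $p^n(s_{1s}^n)p^n(s_{2s}^n)p^n(x_1^n|s_{1s}^n)p^n(x_2^n|s_{2s}^n)$ — which holds because the two transmitters act independently and $X_k^n$ depends only on $(W_{ks},W_{kp})$ with $W_{kp}$ averaged out.
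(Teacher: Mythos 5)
Your proposal is correct and follows essentially the same route as the paper: a Fano/data-processing converse with the message indices $W_{ks}$ identified with the auxiliary sequences $S_{ks}^n$ (the paper formalizes this by an arbitrary one-to-one embedding of the $2^{nR_{ks}}$ messages into sequences in $\mathcal{X}_k^n$), and an achievability via superposition coding in which the receiver uses the known interfering cloud center, lifted to the $n$-letter product channel and combined with the limiting construction of \cite[Theorem 5]{Shannon:1961}. The only differences are cosmetic: the paper proves the single-letter version first and then treats $n$-blocks as super-letters, rather than coding directly over super-blocks of length $nm$.
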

\begin{proof}
A proof of Lemma \ref{n_letter_message_region} is given in Appendix \ref{proof_n_letter_message_region}.
\end{proof}

Now that we have the $n$-letter characterization of the capacity region of interference channels with message side information at the receivers, we are ready to show that the source-channel separation theorem holds when the receivers' side information sequences are deterministic functions of the interfering sources.
\begin{theorem} \label{sepnew}
Sources $U_1$ and $U_2$ can be transmitted reliably to their respective receivers over the discrete memoryless interference channel $p(y_1,y_2|x_1,x_2)$ with side information $V_1=h_1(U_1)$ at Receiver $2$, and side information $V_2=h_2(U_2)$ at Receiver 1, if
\begin{eqnarray}
(H(V_1), H(U_1|V_1), H(V_2), H(U_2|V_2)) \in int(\mathcal{C}_I), \label{si_int}
\end{eqnarray}
where $\mathcal{C}_I$ denotes the capacity region of the interference channel with message side information at receivers.

Conversely, if $(H(V_1), H(U_1|V_1), H(V_2), H(U_2|V_2)) \notin  \mathcal{C}_I$, then sources $U_1$ and $U_2$ cannot be transmitted reliably.
\end{theorem}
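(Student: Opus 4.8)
The plan is to reduce the joint source-channel problem to the pure channel-coding problem solved by Lemma \ref{n_letter_message_region}, exploiting the deterministic structure $V_k=h_k(U_k)$. For achievability, I would use classical (informational) separation together with Slepian–Wolf-type indexing. Since $V_k$ is a function of $U_k$, I split $U_k^n$ into two descriptions: the ``common'' part $W_{ks}$, which is simply a lossless (fixed-rate) description of the i.i.d.\ sequence $V_k^n$ at rate $R_{ks}=H(V_k)+\epsilon$, and the ``private'' part $W_{kp}$, a conditional (Slepian–Wolf) description of $U_k^n$ given $V_k^n$ at rate $R_{kp}=H(U_k|V_k)+\epsilon$. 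Because the decoder of the channel code at Receiver $2$ is handed $W_{1s}$ for free (and symmetrically at Receiver $1$), this is exactly an instance of the interference channel with message side information: if the point $(H(V_1),H(U_1|V_1),H(V_2),H(U_2|V_2))$ lies in $int(\mathcal{C}_I)$, then for large $n$ the slightly inflated rate quadruplet is achievable by Definition of $\mathcal{C}_I$, the receivers recover $(\hat W_{1s},\hat W_{1p})$ and $(\hat W_{2s},\hat W_{2p})$ with vanishing error, and reconstruction of $\hat U_k^n$ then fails only if the source code fails, which also has vanishing probability. A union bound over the source-coding and channel-coding error events completes achievability.

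For the converse, I would assume a sequence of joint source-channel codes $(f_1^n,f_2^n,g_1^n,g_2^n)$ with $P_e^n\to0$ and argue that the induced operating point must lie in $\mathcal{C}_I$. The key observation is that since $V_k^n=h_k(U_k^n)$ is a deterministic function of $U_k^n$, supplying $V_{\pi(k)}^n$ as side information to Receiver $k$ in the source-channel setting is informationally the same as supplying a description of it; so from the joint code we can \emph{manufacture} a channel code for the message-side-information channel. Concretely, I would relabel: let $W_{ks}$ range over the (essentially $2^{nH(V_k)}$) typical values of $V_k^n$ and $W_{kp}$ over $U_k^n$ conditioned on $V_k^n$ (roughly $2^{nH(U_k|V_k)}$ values), so that the pair $(W_{ks},W_{kp})$ is in one-to-one correspondence with $U_k^n$ on the typical set. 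The encoder $f_k^n$ then becomes a channel encoder, and $g_k^n$ — which already takes $(Y_k^n,V_{\pi(k)}^n)=(Y_k^n,W_{\pi(k)s})$ — becomes a channel decoder; by Fano's inequality the per-symbol rates $(\frac1n\log|W_{1s}|,\ldots)$ must approach a point in $\mathcal{C}_I$, and these rates converge to $(H(V_1),H(U_1|V_1),H(V_2),H(U_2|V_2))$. Hence that point lies in $\mathcal{C}_I$, which is closed, giving the contrapositive.

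The main obstacle I anticipate is the bookkeeping in the converse: making the correspondence between $U_k^n$ and the pair $(W_{ks},W_{kp})$ precise while handling atypical sequences (whose total probability vanishes) and ensuring that the message sets one builds are of the right exponential size and, crucially, that $W_{1s},W_{1p},W_{2s},W_{2p}$ are mutually independent so that the definition of $\mathcal{C}_I$ applies verbatim. Independence of the two source-side-information pairs across users is given, and within a user $W_{ks}$ and $W_{kp}$ are not independent under the true source distribution, so one must instead appeal to a genie/standard argument: either pass to a subsequence where the empirical rates converge and invoke the $n$-letter region $\mathcal{G}^n$ directly (which is defined for \emph{arbitrary} input distributions $p^n(s_{1s}^n)p^n(s_{2s}^n)p^n(x_1^n|s_{1s}^n)p^n(x_2^n|s_{2s}^n)$, not just product ones), or equivalently observe that the $n$-letter outer bound in the proof of Lemma \ref{n_letter_message_region} only uses that $(S_{1s}^n,X_1^n)\perp(S_{2s}^n,X_2^n)$, which does hold here. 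I would therefore route the converse through $\mathcal{G}^n$ rather than through single-letter quantities, exactly paralleling the way Theorem \ref{ownSideinfo} is proved via the $n$-letter capacity region in Appendix \ref{proof_ownSideinfo}. The achievability direction, by contrast, should be routine given Lemma \ref{n_letter_message_region} and standard Slepian–Wolf coding.
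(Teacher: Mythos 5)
Your proposal is correct and follows essentially the same route as the paper: achievability by separately indexing $V_k^n$ (at rate $\approx H(V_k)$) and $U_k^n$ given $V_k^n$ (at rate $\approx H(U_k|V_k)$) and invoking the capacity region with message side information, and a converse routed through the $n$-letter region $\mathcal{G}^n$ (with $V_k^n$ playing the role of $S_{ks}^n$, the needed factorization holding because $(U_1,V_1,X_1^n)$ is independent of $(U_2,V_2,X_2^n)$) followed by the closedness of $\mathcal{C}_I$, exactly as in Appendix \ref{proof_theorem_deterministic}. The paper sidesteps the relabeling/typicality bookkeeping you anticipate by bounding $nH(U_k|V_k)$ and $nH(U_k)=nH(V_k)+nH(U_k|V_k)$ directly via Fano's inequality and the Markov chain $(U_1^n,V_1^n)\rightarrow(X_1^n,V_2^n)\rightarrow Y_1^n$, rather than manufacturing explicit message sets.
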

\begin{proof}
A proof of Theorem \ref{sepnew} is given in Appendix \ref{proof_theorem_deterministic}.
\end{proof}
The benefits of considering the side information samples as deterministic functions of the source samples are two-fold. Firstly, the transmitters also know the side information and they can use this knowledge to minimize the amount of interference they cause. Due to this fact, we are able to achieve any point in the capacity region of the interference channel with message side information. Secondly, encoding the information of $V_k$, $k=1,2$ into the codebook at Transmitter $k$ not only helps reduce the interference at the other receiver, but also does not place any extra burden on Receiver $k$ to decode $V_k$, as $V_k$ is a deterministic function of $U_k$. This fact enables the converse proof of the source-channel separation theorem.

\subsection{Necessary and Sufficient Conditions for Reliable Transmission for a Special Case} \label{special}
In Section \ref{sepdet}, we have shown that source-channel separation is optimal when the side information is a deterministic function of the interfering source. Thus, for these cases, if the single-letter characterization of the capacity region of the corresponding interference channel with message side information, i.e., $\mathcal{C}_I$, is known, we would have necessary and sufficient conditions for reliable transmission in a single-letter form. However, a single-letter characterization of $\mathcal{C}_I$ is not known in general as it is a generalization of the capacity region of the classical interference channel.

In this subsection, we consider the class of interference channels studied in \cite{Liu_Goldsmith:2008ISIT}. We show that the Han-Kobayashi scheme is capacity-achieving for this class of interference channels \cite{Liu_Goldsmith:2008ISIT} when the receivers have message side information, and we obtain a single-letter characterization of the capacity region. Hence, we conclude that, for this class of interference channels, when the side information is a deterministic function of the interfering source, the sufficient conditions provided in Theorem \ref{wish} are also necessary, yielding a single-letter characterization of the necessary and sufficient conditions for reliable transmission. This means that the achievability result presented in Theorem \ref{wish} is tight in some special cases.

The special class of interference channels we focus on in this subsection is a class of Z-interference channels. For the Z-interference channels, $p(y_1,y_2|x_1,x_2)$ can be written as $p(y_2|x_1,x_2) \cdot p(y_1|x_1)$, i.e., the channel between $X_1$ and $Y_1$ is a single user channel characterized by $p(y_1|x_1)$. This corresponds to an interference channel in which only the second transmitter-receiver pair faces interference. In particular, the members of the class of Z-interference channels we consider satisfy the following conditions:
\begin{enumerate}
\item For any positive integer $n$, $H(Y_2^n|X_2^n=x_2^n)$, when evaluated with the distribution $\sum_{x_1^n}p(x_1^n) \break p(y_2^n|x_1^n,x_2^n)$,  is independent of $x_2^n$ for any $p(x_1^n)$. \label{shiftn}
\item Define $\tau$ as
\begin{align}
\tau=\max_{p(x_1)p(x_2)} H(Y_2). \label{definetau}
\end{align}
Then there exists a $p^*(x_2)$ such that $H(Y_2)$, when evaluated with the distribution $\sum_{x_1,x_2}p(x_1)$ $p^*(x_2)  p(y_2|x_1,x_2)$, is equal to $\tau$ for any $p(x_1)$.  \label{max}
\end{enumerate}
Please refer to \cite{Liu_Goldsmith:2008ISIT} for intuition behind these conditions and examples of Z-interference channels that satisfy these two conditions.

In the next lemma, we provide a single-letter characterization of $\mathcal{C}_I$, i.e., the capacity region of this class of Z-interference channels with message side information. Since Receiver 1 does not face interference, there is no benefit to having access to the side information $W_{2s}$. Hence, without loss of generality, we assume $R_{2s}=0$.
\begin{lemma} \label{message_region}
The capacity region of Z-interference channels satisfying Conditions 1 and 2, with message side information $W_{1s}$ at Receiver 2, is characterized by
\begin{align}
R_{1p}+R_{1s} & \leq I(X_1;Y_1), \label{messageZ1}\\
R_{2p} & \leq I(W, X_2;Y_2) \mbox{ and}\\
R_{1p}+R_{2p} & \leq I(X_1;Y_1|W)+I(W,X_2;Y_2) \label{messageZ2}
\end{align}
for some $p(w)p(x_1|w)$,
where the mutual informations and entropies are evaluated with the joint distribution of the form \[p(w,x_1,x_2, \break y_1,y_2)=p(w)p(x_1|w)p^*(x_2)p(y_1|x_1)p(y_2|x_1,x_2).\]
\end{lemma}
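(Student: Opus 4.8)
The plan is to prove the two halves of the lemma separately.

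\emph{Achievability.} I would run a Han--Kobayashi scheme tailored to the side information. Transmitter~1 splits $W_{1p}$ into a common part of rate $R_{1c}$ and a private part of rate $R_{1pp}$, with $R_{1p}=R_{1c}+R_{1pp}$; it draws cloud centres $W^{n}(w_{1s},w_{1c})$ i.i.d.\ $\sim\prod p(w)$ over the $2^{n(R_{1s}+R_{1c})}$ index pairs, and for each a family of satellites $X_{1}^{n}(w_{1s},w_{1c},w_{1pp})\sim\prod p(x_{1}|w)$. Transmitter~2 uses an i.i.d.\ $\prod p^{*}(x_{2})$ codebook, which is permissible because the region in the lemma is already stated with $X_{2}\sim p^{*}(x_{2})$ fixed. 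Receiver~1 faces no interference and decodes $X_{1}^{n}$ directly from $Y_{1}^{n}$, which forces $R_{1s}+R_{1p}\le I(X_{1};Y_{1})$ and, from the satellite layer, $R_{1pp}\le I(X_{1};Y_{1}|W)$. Receiver~2 knows $w_{1s}$, restricts attention to that slice of Transmitter~1's codebook, and jointly decodes $(w_{1c},w_{2p})\leftrightarrow(W^{n},X_{2}^{n})$, declaring an error only when $w_{2p}$ is wrong; this gives $R_{2p}\le I(X_{2};Y_{2}|W)$ and $R_{1c}+R_{2p}\le I(W,X_{2};Y_{2})$. The crucial simplification is Condition~\ref{max}: since $H(Y_{2})$ evaluated with $p^{*}(x_{2})$ equals $\tau$ for \emph{every} $p(x_{1})$, one has $H(Y_{2}|W)=\tau=H(Y_{2})$, hence $I(W;Y_{2})=0$ and $I(X_{2};Y_{2}|W)=I(W,X_{2};Y_{2})$. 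Fourier--Motzkin elimination of $R_{1c}$ and $R_{1pp}$ (under nonnegativity and $R_{1c}+R_{1pp}=R_{1p}$) then collapses the five inequalities to exactly \eqref{messageZ1}--\eqref{messageZ2}; setting $R_{1s}=0$ recovers the achievability half of \cite{Liu_Goldsmith:2008ISIT}, and the bookkeeping parallels \cite[Theorem~2]{Chong:2006}.

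\emph{Converse.} Suppose $(R_{1s},R_{1p},R_{2p})$ is achievable. The bound \eqref{messageZ1} follows from Fano at Receiver~1 and $I(W_{1s},W_{1p};Y_{1}^{n})\le I(X_{1}^{n};Y_{1}^{n})\le\sum_{i}I(X_{1,i};Y_{1,i})$, using only that $Y_{1}$ is a memoryless function of $X_{1}$. For the other two bounds I would adapt the single-letterization of \cite{Liu_Goldsmith:2008ISIT}. Apply Fano at Receiver~2 conditioned on its side information, $H(W_{2p})=H(W_{2p}|W_{1s})\le I(W_{2p};Y_{2}^{n}|W_{1s})+n\epsilon_{n}$ — it is exactly this conditioning on $W_{1s}$ that keeps $R_{1s}$ out of the sum-rate bound. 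Introduce a suitable auxiliary built from Message~1 and past outputs of the interference-free link, e.g.\ $W_{i}=(W_{1s},W_{1p},Y_{1}^{i-1})$, so that $W_{i}$ is independent of $X_{2,i}$ and $W_{i}-X_{1,i}-(X_{2,i},Y_{1,i},Y_{2,i})$ is a Markov chain matching the joint law $p(w)p(x_{1}|w)p^{*}(x_{2})p(y_{1}|x_{1})p(y_{2}|x_{1},x_{2})$; then carry out the chain-rule manipulations together with a uniform time-index argument absorbed into $W$. Condition~\ref{shiftn} at block length $n$ is used to single-letterize the term $H(Y_{2}^{n}\mid X_{2}^{n},\cdot)$, while the definition \eqref{definetau} of $\tau$ and Condition~\ref{max} bound the ``free'' term $H(Y_{2}^{n}\mid\cdot)\le n\tau$ and let the $X_{1}$-marginal induced by the code be paired with $p^{*}(x_{2})$. (Recall $R_{2s}=0$, so there is no fourth message to track.)

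The step I expect to be the main obstacle is this converse single-letterization: one must pick the auxiliary $W_{i}$ so that \emph{simultaneously} it has the correct independence and Markov relations with $X_{1,i}$ and $X_{2,i}$, the $W_{1s}$-conditioning at Receiver~2 cancels precisely the $R_{1s}$ contribution in \eqref{messageZ2}, and the $Y_{2}$-entropy terms line up with Conditions~\ref{shiftn} and \ref{max} so that the sum constraint closes with $X_{2}\sim p^{*}(x_{2})$ fixed. Getting all three to hold at once — not any one of them in isolation — is the nontrivial part, and it is precisely where the argument of \cite{Liu_Goldsmith:2008ISIT} must be re-run with care; the remaining steps are essentially mechanical.
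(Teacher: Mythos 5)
Your achievability argument is fine and is essentially the paper's own scheme in different clothing: your split of $W_{1p}$ into a common rate $R_{1c}$ and a private rate $R_{1pp}$ is the paper's superposition code with inner rate $R_{1s}+\gamma$ and outer rate $R_{1p}-\gamma$ (identify $\gamma=R_{1c}$), your use of Condition \ref{max} to get $I(W;Y_2)=0$ under $p^*(x_2)$ is exactly how the inner/outer bounds are matched, and the Fourier--Motzkin step correctly yields \eqref{messageZ1}--\eqref{messageZ2}.

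The converse, however, has a genuine gap, and it sits precisely at the step you yourself flag as the obstacle. Your proposed auxiliary $W_i=(W_{1s},W_{1p},Y_1^{i-1})$ cannot work: with deterministic encoders $X_{1,i}$ is a function of $(W_{1s},W_{1p})$, hence of $W_i$, so $I(X_{1,i};Y_{1,i}\mid W_i)=0$ and the single-letter $W$ you would obtain makes the right-hand side of \eqref{messageZ2} equal to $I(X_1,X_2;Y_2)$-type quantities. That bound is simply false for channels in this class (take $Y_2=X_2\oplus Z$ independent of $X_1$, which satisfies Conditions \ref{shiftn} and \ref{max}, and a clean $p(y_1|x_1)$: then $R_{1p}$ can approach $I(X_1;Y_1)$ while $R_{2p}$ approaches $I(X_2;Y_2)$, violating the purported constraint), so no chain of inequalities with that auxiliary can close. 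Dropping $W_{1p}$ and keeping $W_i=(W_{1s},Y_1^{i-1})$ does not rescue the argument either, because the sum-rate bound requires a \emph{lower} bound on the $Y_2$-side term $H(Y_2^n\mid X_2^n,W_{1s})$ by $\sum_i H(\cdot\mid W_i)$ with the \emph{same} $W_i$ used on the $Y_1$ side, and the chain rule naturally conditions on past $T$'s, not past $Y_1$'s; ``conditioning reduces entropy'' goes the wrong way here. The paper's mechanism, which is the missing idea, is: (i) use Condition \ref{shiftn} to replace $H(Y_2^n\mid X_2^n,W_{1s})$ by $H(T^n\mid W_{1s})$, where $T^n$ is the output of the cross-channel with $X_2$ frozen at a constant $\bar{x}_2$; (ii) apply the Csisz\'ar--K\"orner sum identity (Lemma \ref{nletterdifference}) to the pair $(T^n,Y_1^n)$ conditioned on $W_{1s}$, with the auxiliary $W_i=(Y_1^{i-1},T_{i+1},\dots,T_n,W_{1s})$ containing past $Y_1$'s \emph{and future $T$'s}; this yields a single slack $\gamma\ge 0$ such that $\tfrac1n H(Y_1^n\mid W_{1s})=H(Y_1\mid W)+\gamma$ and $\tfrac1n H(T^n\mid W_{1s})=H(T\mid W)+\gamma$, so $\gamma$ enters the $R_{1p}$ bound with a plus sign and the $R_{2p}$ bound with a minus sign and cancels in the sum, after which Conditions \ref{shiftn} and \ref{max} (via $\tau$ in \eqref{definetau} and $p^*(x_2)$) convert $\tau-H(T\mid W)$ into $I(W,X_2;Y_2)$, exactly as in \cite{Liu_Goldsmith:2008ISIT}. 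Without this surrogate process $T^n$ and the past/future auxiliary, the three requirements you correctly list (Markov structure, cancellation of $R_{1s}$, and alignment with Conditions \ref{shiftn}--\ref{max}) cannot be met simultaneously.
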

\begin{proof}
A proof of Lemma \ref{message_region} is given in Appendix \ref{proof_message_region}.
\end{proof}

The proof of Lemma \ref{message_region} indicates that superposition encoding and partial decoding is capacity-achieving. More specifically, the codebook at Transmitter 1 is such that the inner codebook carries the side information at Receiver 2, i.e., $W_{1s}$, and part of $W_{1p}$, and the outer codebook carries the remaining part of $W_{1p}$.

Comparing these results in the case of side information at the receiver with the traditional Z-interference channel \cite{Liu_Goldsmith:2008ISIT}, the rate of $W_{1p}$ takes the place of $W_1$, which means that the message that causes interference is reduced from $W_1$ to $W_{1p}$. Due to the fact that $W_{1s}$ is available at Receiver 2, $W_{1s}$ does not cause any interference and therefore its rate can be made as large as possible within the constraint of the capacity of the channel $p(y_1|x_1)$ depicted by (\ref{messageZ1}).

Having established the capacity region of this special class of Z-interference channels with message side information at the receiver, we next consider the joint source-channel coding problem for this channel model with the assumption that each side information sample $V_{1,i}$ is a deterministic function of the corresponding source sample $U_{1,i}$, i.e., $V_{1,i}=h_1(U_{1,i})$, for $i=1,2,\cdots$ for some deterministic function $h_1$. Since the first transmitter-receiver pair is interference-free, without loss of generality, we assume $V_2=\emptyset$.

Since source-channel separation is shown to be optimal in Theorem \ref{sepnew} for the source and side information structure under consideration, we are able to characterize necessary and sufficient conditions for the reliable transmission of the sources in the single-letter form using the capacity region characterization given in Lemma \ref{message_region}.

\begin{corollary} \label{necessary_sufficient}
For Z-interference channels satisfying Conditions 1 and 2, and side information $V_1=h_1(U_1)$ at Receiver 2,
%
%
necessary and sufficient conditions for reliable transmission are
\begin{align}
H(U_1) < & I(X_1;Y_1) \\
H(U_2) < & I(W, X_2;Y_2) \mbox{ and } \\
H(U_1|V_1)+H(U_2)  < & I(W,X_2;Y_2)+I(X_1;Y_1|W)
\end{align}
for some $p(w)p(x_1|w)$,
where the mutual informations and entropies are evaluated with $p(u_1, v_1, u_2, \break w,x_1,x_2,y_1,y_2)=p(u_1,v_1)p(u_2)p(w)p(x_1|w)p^*(x_2)p(y_1|x_1)p(y_2|x_1,x_2)$.
\end{corollary}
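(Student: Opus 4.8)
The plan is to derive Corollary~\ref{necessary_sufficient} by combining the source-channel separation result of Theorem~\ref{sepnew} with the single-letter capacity characterization of Lemma~\ref{message_region}, specialized to the Z-interference channel with $V_2=\emptyset$ (so that $R_{2s}=0$ as already noted). First I would instantiate the deterministic side information setup: since $V_{1,i}=h_1(U_{1,i})$, we have $H(V_1|U_1)=0$, so the hypotheses of Theorem~\ref{sepnew} apply with Transmitter~1 splitting its source into $W_{1s}$ (encoding $V_1^n$, rate $H(V_1)$) and $W_{1p}$ (encoding $U_1^n \mid V_1^n$, rate $H(U_1|V_1)$), while Transmitter~2 sends a single message of rate $H(U_2)$ (here $R_{2s}=0$, so $W_{2p}$ carries all of $U_2^n$ at rate $H(U_2)$). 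By Theorem~\ref{sepnew}, reliable transmission is possible if $(H(V_1), H(U_1|V_1), 0, H(U_2)) \in int(\mathcal{C}_I)$ and impossible if this point lies outside $\mathcal{C}_I$.

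Next I would substitute the explicit description of $\mathcal{C}_I$ from Lemma~\ref{message_region}, with the identification $R_{1s}=H(V_1)$, $R_{1p}=H(U_1|V_1)$, $R_{2p}=H(U_2)$. Inequality~\eqref{messageZ1} becomes $H(U_1|V_1)+H(V_1) = H(U_1) \le I(X_1;Y_1)$ (using the chain rule $H(U_1)=H(V_1)+H(U_1|V_1)$, valid precisely because $V_1$ is a function of $U_1$); inequality~\eqref{messageZ2} for $R_{2p}$ becomes $H(U_2) \le I(W,X_2;Y_2)$; and the sum constraint~\eqref{messageZ2} on $R_{1p}+R_{2p}$ becomes $H(U_1|V_1)+H(U_2) \le I(X_1;Y_1|W)+I(W,X_2;Y_2)$. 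This yields exactly the three displayed inequalities of the corollary, with the joint distribution factoring as $p(u_1,v_1)p(u_2)p(w)p(x_1|w)p^*(x_2)p(y_1|x_1)p(y_2|x_1,x_2)$ since the source pair is independent of the channel variables and $V_2=\emptyset$ means $U_2$ carries no side-information split. The passage from the open/closed region dichotomy in Theorem~\ref{sepnew} to strict inequalities in the corollary is routine: strict inequalities describe $int(\mathcal{C}_I)$ for achievability, and non-strict violation of any one inequality for some—equivalently all—choices of $p(w)p(x_1|w)$ places the point outside $\mathcal{C}_I$ for the converse.

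The main subtlety, and the step I would be most careful about, is the converse direction: Theorem~\ref{sepnew} only asserts non-achievability when the rate tuple lies strictly outside $\mathcal{C}_I$, so I must argue that if the three inequalities of the corollary fail to hold simultaneously for every admissible $p(w)p(x_1|w)$, then $(H(V_1),H(U_1|V_1),0,H(U_2)) \notin \mathcal{C}_I$. This follows because $\mathcal{C}_I$ as given in Lemma~\ref{message_region} is the union over $p(w)p(x_1|w)$ of the polytopes defined by \eqref{messageZ1}--\eqref{messageZ2}, so membership of the point in $\mathcal{C}_I$ is equivalent to the existence of some $p(w)p(x_1|w)$ satisfying all constraints with the substituted values; negating gives the converse. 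One also needs that $x_2$ is forced to the fixed $p^*(x_2)$ and that this does not cost anything, which is already built into Lemma~\ref{message_region}. Everything else—the chain rule manipulation $H(V_1)+H(U_1|V_1)=H(U_1)$ and the fact that the channel input distribution constraints in $\mathcal{C}_I$ match those in the corollary—is immediate, so the proof is essentially a bookkeeping exercise once Theorem~\ref{sepnew} and Lemma~\ref{message_region} are in hand.
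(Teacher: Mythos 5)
Your proposal is correct and follows essentially the same route as the paper, which proves the corollary exactly by combining Theorem \ref{sepnew} with Lemma \ref{message_region} under the identification $R_{1s}=H(V_1)$, $R_{1p}=H(U_1|V_1)$, $R_{2s}=0$, $R_{2p}=H(U_2)$ and the chain rule $H(V_1)+H(U_1|V_1)=H(U_1)$. The paper states this in one line; your write-up just makes the bookkeeping (and the routine interior/closure caveat at the region's boundary) explicit.
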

\begin{proof}
Corollary \ref{necessary_sufficient} follows directly from combining Theorem \ref{sepnew} and Lemma \ref{message_region}.
\end{proof}

In Corollary \ref{c:separate}, specify $V_2=\emptyset$, choose $\overline{W}_2=\widetilde{W}_2=\emptyset$, $\overline{W}_1=V_1$, $Q=\emptyset$ and $p(x_2)=p^*(x_2)$. Renaming $\widetilde{W}_1$ as $W_1$ and using Condition 2 and the fact that $H(U_1)-H(V_1) = H(U_1|V_1)$, we obtain a sufficient condition which is the same as the necessary and sufficient condition specified in Corollary \ref{necessary_sufficient}. Hence, we conclude that in this special case, the sufficient conditions described in Corollary \ref{c:separate} based on separate source and channel coding are also necessary. This shows that the conditions presented in Theorem \ref{wish} are also necessary at least in certain scenarios.

Corollary \ref{necessary_sufficient} shows how the side information $V_1=h_1(U_1)$ about the interference $U_1$ helps in reliable transmission, and determines the most efficient way of using this side information: Transmitter 1 performs a separation-based encoding scheme. It first splits its source $U_1^n$ into $V_1^n$ and a remaining part using entropy-achieving data compression techniques, and thus obtains two messages $W_{V_1^n}$ and $W_{U_1^n|V_1^n}$. Then, it further splits message $W_{U_1^n|V_1^n}$ into two parts $W_{\text{inner}}$ and $W_{\text{outer}}$, at rates $\gamma$ and $H(U_1|V_1)-\gamma$, respectively. Next, it performs superposition encoding, transmitting $W_{V_1^n}$ and $W_{\text{inner}}$ through the inner code at rate $H(V_1)+\gamma$, and $W_{\text{outer}}$ through the outer code at rate $H(U_1|V_1)-\gamma$. Transmitter 2 performs separation-based source-channel coding, first mapping $U_2^n$ into a message $W_2$ and then mapping $W_2$ into a codeword of an i.i.d. codebook generated with distribution $p^*(x_2)$. Receiver 1 decodes both the inner and the outer codes. Receiver 2 knows the side information $V_1^n$ and hence sees an inner codebook at an effective rate of $\gamma$ only. It decodes the inner codeword and the codeword of Transmitter 2 jointly using the received signal and the available side information about the interference.

The intuition obtained from the special case derived in this subsection is that one should put as much information as possible about the side information within the inner codebook, in order to minimize the impact of interference when the side information about the interference is available at the receiver.

\section{Z-Channel with Degraded Message Sets} \label{s:degmsg_Z}

The result in (\ref{messageZ1})-(\ref{messageZ2}) is directly related to the capacity region of the Z-channel with degraded message sets, based on the intuition gained from the proof of Theorem 3 in \cite{Kramer:2007}. The intuition in \cite{Kramer:2007} is that when the receiver has some side information about the undesired message, we can set up a new scenario in which the receiver does not have access to the side information, and is required to decode it. Then, when we remove the rate constraint associated with decoding of the side information at the receiver in the capacity region of the new scenario, we get the capacity results of the original scenario. Therefore, the solution given in (\ref{messageZ1})-(\ref{messageZ2}) resembles the solution of the following problem.

The channel is described by two transition probabilities $p(y_1|x_1)$ and $p(y_2|x_1,x_2)$, and satisfies both Conditions 1 and 2. There are three independent messages $W_{1c}$, $W_{1p}$ and $W_2$. Transmitter 1 has messages $W_{1c}$ and $W_{1p}$ and Transmitter 2 has message $W_2$. $W_{1c}$ needs to be decoded at both receivers, while $W_{1p}$ and $W_2$ need to be decoded only at Receiver 1 and Receiver 2, respectively.

This channel model includes the Z-interference channel as a special case, when the rate of $W_{1c}$ is zero. Compared to the definition of the Z-channel in \cite{Vishwanath:2003Globecom}, $W_{1c}$ is not only intended for Receiver 2, but also for Receiver 1. Therefore, we call this channel model as the \emph{Z-channel with degraded message sets}.

Then the capacity region for the Z-channel satisfying Conditions 1 and 2, with degraded message sets can be characterized as follows:
\begin{align}
R_{1p} & \leq I(X_1;Y_1|W)+\gamma, \\
R_{1c}+R_{1p} & \leq I(X_1;Y_1), \\
R_{1c} & \leq I(W;Y_2|X_2)-\gamma \mbox{ and }\\
R_2+R_{1c} & \leq \tau-H(Y_2|W,X_2)-\gamma,
\end{align}
for some $p(w)p(x_1|w)$ and $\gamma \geq 0$ where the mutual informations and entropies are evaluated using $
p(w, x_1, x_2, y_1, y_2)=p(w)p(x_1|w)p^*(x_2)p(y_1|x_1)p(y_2|x_1,x_2)
$.
The proof of this result follows from arguments very similar to those used in the scenario of message side information at the receiver considered in Lemma \ref{message_region}.

\section{Conclusions} \label{conclusions}

We have studied the problem of joint source-channel coding in interference channels with correlated receiver side information. In the case when the receiver side information is correlated with its desired source, we have shown that separate design of source and channel codes is optimal. In order to minimize the interference to the other transmitter-receiver pair, the transmitters should transmit only the part of their sources that is not already known by their corresponding receivers.

For the case in which the receiver side information is correlated with the interfering source, we have provided sufficient conditions for reliable transmission by proposing a joint source-channel coding scheme based on the idea of superposition encoding and partial decoding of Han and Kobayashi. As a special case, we have focused on the scenario in which the side information at the receiver is a deterministic function of the interfering source, and we have shown that source-channel separation is optimal for this situation as well. In both cases for which the optimality of source-channel separation is established, we have used the $n$-letter expression for the capacity region as a single-letter expression is not available in general.

Finally, for a class of Z-interference channels for which superposition encoding and partial decoding is optimal in the absence of receiver side information, when the receiver facing interference has access to a deterministic function of the interfering source, we have shown that the provided sufficient conditions are also necessary. Hence, the sufficient conditions are tight at least in some special cases.

\appendices

\section{Proof of Theorem \ref{ownSideinfo}} \label{proof_ownSideinfo}
The achievability part of the proof is straightforward. If (\ref{ownSideagain}) holds, then there exists a rate pair $(R_1, R_2)$ in the interior of $\mathcal{C}$ such that $H(U_k|V_k) \leq R_k$ for $k=1,2$. Each transmitter compresses its source with respect to the side information at its own receiver. This can be done at rate $R_k$ due to the Slepian-Wolf theorem. Then the compressed bits can be transmitted reliably over the channel since $(R_1,R_2)$ is in the capacity region of the underlying interference channel.

To prove the converse, we first provide an infinite letter expression for the capacity region of the interference channel given in \cite{Ahlswede:1971}. We define
\begin{align}
E_n \triangleq \bigg\{ &\left(\frac{1}{n}I(X_1^n;Y_1^n), \frac{1}{n}I(X_2^n;Y_2^n) \right): p(x_1^n, x_2^n) = p(x_1^n)p(x_2^n) \bigg\}.
\end{align}
Then
\begin{eqnarray}
\mathcal{C} = \lim_{n \rightarrow \infty} E_n, \label{cap_region}
\end{eqnarray}
where the limit is defined as in \cite[Theorem 5]{Shannon:1961}. $\mathcal{C}$ is a closed convex set in the Euclidean plane.

From Fano's inequality \cite{Cover:book}, we have, for $k=1,2$,
\begin{eqnarray}
H(U_k^n|\hat{U}_k^n) &\leq & n \delta(P_e^n),
\end{eqnarray}
where $\delta(x)$ is a non-negative function approaching zero as $x \rightarrow 0$.

Next, we write the following chain of inequalities:
\begin{eqnarray}
\frac{1}{n} I(X_1^n ; Y_1^n) &\geq&  \frac{1}{n} I(U_1^n ; Y_1^n) \label{ach_eq1} \\
        &=&  \frac{1}{n} I(U_1^n, V_1^n ; Y_1^n) \label{ach_eq2} \\
        &\geq&  \frac{1}{n} I(U_1^n ; Y_1^n | V_1^n) \label{ach_eq3} \\
        &=&  \frac{1}{n} \left[H(U_1^n | V_1^n) - H(U_1^n | V_1^n, Y_1^n) \right] \label{ach_eq4} \\
        &\geq& H(U_1|V_1) - H(U_1^n | \hat{U}_1^n)  \label{ach_eq5} \\
        &\geq& H(U_1|V_1) - \delta(P_e^n) \label{ach_eq6}
\end{eqnarray}
where (\ref{ach_eq1}) follows since $U_1^n \rightarrow X_1^n \rightarrow Y_1^n$ form a Markov chain, similarly (\ref{ach_eq2}) follows since $V_1^n \rightarrow U_1^n \rightarrow Y_1^n$ form a Markov chain, and finally (\ref{ach_eq6}) follows from Fano's inequality. Similarly, we can also show
\begin{eqnarray}
\frac{1}{n} I(X_2^n ; Y_2^n) &\geq&  H(U_2|V_2) - \delta(P_e^n),
\end{eqnarray}
where the joint probability distribution factors as $p(x_1^n)$$p(x_2^n)$.

From the capacity region given in (\ref{cap_region}), we see that $(H(U_1|V_1) - \delta(P_e^n), H(U_2|V_2) - \delta(P_e^n)) \in \mathcal{C}$ for all $n$. Then, since $\delta(P_e^n) \rightarrow 0$ as $n \rightarrow \infty$, and from the compactness of the capacity region, we can conclude that $P_e^n \rightarrow 0$ implies that $(H(U_1|V_1), H(U_2|V_2)) \in \mathcal{C}$. This completes the proof.

\section{Proof of Theorem \ref{wish}} \label{proofsufficient}

We first briefly review the notions of types and strong typicality that will be used in the proof. Given a distribution $p(x)$, the type $P_{x^n}$ of an $n$-tuple $x^n$ is the empirical distribution
\[P_{x^n} = \frac{1}{n} N(a|x^n)\]
where $N(a|x^n)$ is the number of occurrences of the letter $a$ in $x^n$. The set of all $n$-tuples $x^n$ with type $Q$ is called the type class $Q$ and is denoted by $T^n(Q)$. The set of $\delta$-strongly typical $n$-tuples according to $p(x)$ is denoted by $T_{\epsilon}^n(X)$ and is defined by
\[T_{\epsilon}^n(X) = \left\{ x^n \in \mathcal{X}^n : \left| \frac{1}{n} N(a|x^n)-p(a)\right|\leq \delta,  \forall a \in \mathcal{X} \mbox{ and } N(a|x^n)=0 \mbox{ whenever } p(x)=0   \right\}. \]

The definitions of type and strong typicality can be extended to joint and conditional distributions in a similar manner \cite{Csiszar:book}. We have
\begin{eqnarray}\label{type0}
\left|\frac{1}{n}\log |T_{\epsilon}^n(X)| - H(X) \right| \leq \delta
\end{eqnarray}
for sufficiently large $n$. Given a joint distribution $p(x,y)$, if $(x^n, y^n) \sim p^n(x) p^n(y)$, where $p^n(x)$ and $p^n(y)$ are $n$-fold products of the marginals $p(x)$ and $p(y)$, then
\begin{eqnarray}\label{type1}
\mathrm{Pr} \{(x^n, y^n) \in T_{\epsilon}^n(XY) \} \leq 2^{-n(I(X;Y)-3\delta)}.
\end{eqnarray}

Now, we start the achievability proof. Fix a joint distribution as in (\ref{dis}), where $p(u_1,v_1)$, $p(u_2,v_2)$, $p(y_1,y_2|x_1,x_2)$ are given while we are free to choose $p(q)$, $p(w_1, x_1|u_1,q)$ and  $ \break p(w_2, x_2|u_2,q)$.

\emph{Codebook generation}: First, generate one random $n$-sequence $q^n$ in an i.i.d. fashion according to $p(q)$.

Next, for Transmitter 1, generate a codebook of size $L_1$ with $\frac{1}{n} \log L_1  > I(U_1;W_1|Q)$, in which the codewords are generated i.i.d. with distribution $p(w_1|q)$. This codebook is denoted by $\mathcal{C}_w^1$.

For each possible source output $u_1^n$, count the number of codewords in $\mathcal{C}_w^1$ that are jointly typical with $u_1^n$. If there are at least $L_1 2^{-nI(U_1;W_1|Q)-2n \epsilon}$ codewords in $\mathcal{C}_w^1$ jointly typical with $u_1^n$, choose one uniformly at random, and call it $w_1^n(u_1^n)$. If there are fewer than $L_1 2^{-nI(U_1;W_1|Q)-2n \epsilon}$ codewords of $\mathcal{C}_w^1$ jointly typical with $u_1^n$, randomly choose one codeword from $\mathcal{C}_w^1$ to be $w_1^n(u_1^n)$. The reason why we require the number of codewords jointly typical with $u_1^n$ to be large is to benefit the probability of error calculation later on in the proof. In a similar fashion, we generate $\mathcal{C}_w^2$.

Define $F(u_1^n, u_2^n)$ as the event that the number of $w_1^n \in \mathcal{C}_w^1$ jointly typical with $u_1^n$ is larger than $L_1 2^{-nI(U_1;W_1|Q)-2n \epsilon}$ and the number of $w_2^n \in \mathcal{C}_w^2$ jointly typical with $u_2^n$ is larger than $L_2 2^{-nI(U_2;W_2|Q)-2n \epsilon}$. Next, we will show that
\begin{align}
\text{Pr}\{F^c(U_1^n, U_2^n)\} \leq 3\epsilon, \label{rate_distortion}
\end{align}
 where ``$c$'' denotes the complement.

For each $(q^n, u_1^n, u_2^n) \in T_\epsilon^n(Q U_1 U_2)$, define the random variable $\nu(i,u_1^n)$ as follows: $\nu(i,u_1^n)$ is $1$ if the $i$-th codeword of $\mathcal{C}_w^1$ is jointly typical with $u_1^n$ and $0$ otherwise. Then,
\begin{align}
2^{-nI(U_1;W_1|Q)-n \epsilon} \leq \mathbf{E}[\nu(i,u_1^n)|q^n]&=\text{Pr}\{\nu(i,u_1^n)=1|q^n\} \leq 2^{-nI(U_1;W_1|Q)+n \epsilon} \label{single1}\\
\mathbf{V}[\nu(i,u_1^n)|q^n] &\leq \mathbf{E}^2[\nu(i,u_1^n)|q^n] \leq \mathbf{E} [\nu(i,u_1^n)] \label{single2}
\end{align}
where $\mathbf{E}$ and $\mathbf{V}$ denote the expectation and variance, respectively. Further define random variable $N(u_1^n)$ as the number of codewords in $\mathcal{C}_w^1$ that are jointly typical with $u_1^n$, i.e.,
\begin{align}
N(u_1^n)=\sum_{i=1}^{L_1} \nu(i,u_1^n).
\end{align}
Then, from (\ref{single1}) and (\ref{single2}), we have
\begin{align}
L_1 2^{-nI(U_1;W_1|Q)-n \epsilon} \leq \mathbf{E}[N(u_1^n)|q^n]&=\sum_{i=1}^{L_1} \mathbf{E}[\nu(i,u_1^n)|q^n] \leq L_1 2^{-nI(U_1;W_1|Q)+n\epsilon}  \label{bound1}\\
\mathbf{V}[N(u_1^n)|q^n]&=\sum_{i=1}^{L_1} \mathbf{V}[\nu(i,u_1^n)|q^n] \leq \mathbf{E}[N(u_1^n)|q^n].\label{bound2}
\end{align}
Hence, we have
\begin{align}
\text{Pr}\big\{N(u_1^n) \leq & L_1 2^{-nI(U_1;W_1|Q)-2n \epsilon}|q^n\big\}\nonumber\\
& = \text{Pr}\left\{\mathbf{E}[N(u_1^n)|q^n]-N(u_1^n) \geq  \mathbf{E}[N(u_1^n)|q^n]-L_1 2^{-nI(U_1;W_1|Q)-2n \epsilon}|q^n\right\}\\
& \leq \text{Pr}\left\{\mathbf{E}[N(u_1^n)|q^n]-N(u_1^n) \geq  L_1 2^{-nI(U_1;W_1|Q)-n\epsilon} -L_1 2^{-nI(U_1;W_1|Q)-2n \epsilon}|q^n\right\} \label{larger1}\\
& \leq \text{Pr}\left\{\big|\mathbf{E}[N(u_1^n)|q^n]-N(u_1^n) \big| \geq  L_1 2^{-nI(U_1;W_1|Q)-n\epsilon} -L_1 2^{-nI(U_1;W_1|Q)-2n \epsilon}\big|q^n\right\}\\
& \leq \frac{\mathbf{V}[N(u_1^n)|q^n]}{\left(L_1 2^{-nI(U_1;W_1|Q)-n\epsilon} -L_1 2^{-nI(U_1;W_1|Q)-2n \epsilon} \right)^2} \label{chebyshev}\\
& \leq \frac{\mathbf{E}[N(u_1^n)|q^n]}{\left(L_1 2^{-nI(U_1;W_1|Q)-n\epsilon} -L_1 2^{-nI(U_1;W_1|Q)-2n \epsilon} \right)^2} \label{larger2}\\
& \leq \frac{L_1 2^{-nI(U_1;W_1|Q)+n\epsilon}}{\left(L_1 2^{-nI(U_1;W_1|Q)-n\epsilon} -L_1 2^{-nI(U_1;W_1|Q)-2n \epsilon} \right)^2} \label{larger3}\\
& \leq \epsilon \label{larger4}
\end{align}
where (\ref{larger1}) and (\ref{larger3}) follows from (\ref{bound1}),  (\ref{chebyshev}) follows from Chebyshev's inequality, (\ref{larger2}) follows from (\ref{bound2}), and (\ref{larger4}) is true when $n$ is large enough. The same analysis applies for $u_2^n$.

Hence, we have proved that
\begin{align}
\text{Pr}\{F^c(u_1^n, u_2^n)|q^n\} & = \text{Pr}\left\{N(u_1^n) \leq L_1 2^{-nI(U_1;W_1|Q)-2n \epsilon} \text{ or } N(u_2^n) \leq L_2 2^{-nI(U_2;W_2|Q)-2n \epsilon}|q^n\right\}  \nonumber \\
& \leq  2\epsilon \label{uselaternan}
\end{align}
for all $(q^n, u_1^n, u_2^n) \in T_\epsilon^n(Q U_1 U_2)$ and all sufficiently large $n$. This means that
\begin{align}
\text{Pr}\{F^c(U_1^n, U_2^n)\} = & \sum_{q^n, u_1^n, u_2^n} \text{Pr}\{F^c(U_1^n, U_2^n)|(U_1^n, U_2^n, Q^n)=(u_1^n, u_2^n, q^n)\}  \nonumber \\
& ~~~~~~~~~~ \cdot \text{Pr}\{(U_1^n, U_2^n, Q^n)=(u_1^n, u_2^n, q^n)\} \\
= & \sum_{(q^n, u_1^n, u_2^n) \in T_\epsilon^n(QU_1 U_2)} \text{Pr}\{F^c(U_1^n, U_2^n)|(U_1^n, U_2^n, Q^n)=(u_1^n, u_2^n, q^n)\}  \nonumber \\
& ~~~~~~~~~~ \cdot \text{Pr}\{(U_1^n, U_2^n, Q^n)=(u_1^n, u_2^n, q^n)\} \nonumber \\
& + \sum_{(q^n, u_1^n, u_2^n) \notin T_\epsilon^n(QU_1 U_2)} \text{Pr}\{F^c(U_1^n, U_2^n)|(U_1^n, U_2^n, Q^n)=(u_1^n, u_2^n, q^n)\}  \nonumber \\
& ~~~~~~~~~~ \cdot \text{Pr}\{(U_1^n, U_2^n, Q^n)=(u_1^n, u_2^n, q^n)\} \\
\leq &  2\epsilon+\text{Pr}\{(Q^n U_1^n, U_2^n) \notin T_\epsilon^n (Q U_1 U_2)\} \label{usebeforenan} \\
\leq &  3 \epsilon \label{rate_distortion}
\end{align}
where (\ref{usebeforenan}) follows from (\ref{uselaternan}), and (\ref{rate_distortion}) follows when $n$ is large enough from the asymptotic equipartition property (AEP) \cite{Cover:book}.

This means that with large probability, the number of sequences jointly typical with $U_1^n$ and $U_2^n$ in codebooks $\mathcal{C}_w^1$ and $\mathcal{C}_w^2$ are larger than $L_1 2^{-nI(U_1;W_1|Q)-2n\epsilon}$ and $L_2 2^{-nI(U_2;W_2|Q)-2n\epsilon}$, respectively. This fact will be used in the probability of error calculation.

\emph{Codebook generation}: For each possible $u_1^n$ sequence, generate one $x_1^n$ sequence in an i.i.d. fashion, conditioned on $w_1^n(u_1^n)$, $u_1^n$ and $q^n$, according to $p(x_1|u_1,w_1,q)$. This $x_1^n$ sequence is denoted by $x_1^n(u_1^n, w_1^n(u_1^n))$. The collection of all $x_1^n$ sequences will be denoted as the codebook $\mathcal{C}_x^1$.  Similarly, we generate the codebook $\mathcal{C}_x^2$.


\emph{Encoding}:
When Transmitter 1 observes the sequence $u_1^n$, it transmits $x_1^n(u_1^n, w_1^n(u_1^n))$. Similarly for Transmitter 2.

\emph{Decoding}:
Receiver 1 finds the unique pair $(u_1^n, w_2^n)$, $u_1^n \in \mathcal{U}_1^n$, $w_2^n \in \mathcal{C}_w^2$, such that $(u_1^n, w_1^n(u_1^n)$, $x_1^n(u_1^n,w_1^n(u_1^n))$, $w_2^n$, $y_1^n, v_2^n)$ are jointly typical and declares the first component of the pair as the transmitted source. If there are more than one pair, and the first component of the pairs are the same, then the decoder declares the transmitted source to be the first component. If there are more than one pair, and the first component of the pairs are not the same, an error is declared. Also, if no such pair exists, an error is declared. Similarly for Receiver 2.

\emph{Probability of error calculation}:
Denote by $E(u_1^n, w_2^n)$ the event $(u_1^n, w_1^n(u_1^n), X_1^n(u_1^n,w_1^n(u_1^n)), \break w_2^n, Y_1^n, V_2^n) \in T_\epsilon^n(U_1 W_1 X_1 W_2 Y_1 V_2|q^n)$ for $(u_1^n, w_2^n) \in \mathcal{U}_1^n \times \mathcal{C}_w^2$. Further denote by $G(u_1^n, u_2^n)$ the event $(u_1^n, u_2^n, w_1^n(u_1^n), w_2^n(u_2^n)) \in T_\epsilon^n(U_1 U_2 W_1 W_2|q^n)$.

Then, the probability of error at Receiver 1 conditioned on $Q^n=q^n$, denoted by $P_e^1$, is given by
\begin{align}
\text{Pr} \left\{E^c \right. & \left. (U_1^n, w_2^n(U_2^n) )\text{ or }\bigcup_{(u_1^n, w_2^n): u_1^n \neq U_1^n} E(u_1^n, w_2^n) \right\} \\
\leq & \text{Pr} \Bigg\{E^c(U_1^n, w_2^n(U_2^n) ) \text{ or }F^c(U_1^n, U_2^n) \text{ or } G^c(U_1^n, U_2^n) \text{ or } \bigcup_{(u_1^n, w_2^n):u_1^n \neq U_1^n} E(u_1^n, w_2^n) \Bigg\}\\
\leq & \text{Pr} \left\{E^c(U_1^n, w_2^n(U_2^n) ) \text{ or }F^c(U_1^n, U_2^n) \text{ or } G^c(U_1^n, U_2^n) \right\} \nonumber\\
&~~~~ + \text{Pr} \left\{\bigcup_{(u_1^n, w_2^n):u_1^n \neq U_1^n} E(u_1^n, w_2^n) \bigg| E\cap F\cap G \right\}\\
\leq & \text{Pr}\left\{F^c(U_1^n, U_2^n)  \right\}+ \text{Pr} \left\{G^c(U_1^n, U_2^n)|F \right\}+\text{Pr} \left\{E^c(U_1^n, w_2^n(U_2^n) ) |F \cap G \right\} \nonumber\\
&~~~~ + \mathbf{E}\left\{ \sum_{(u_1^n, w_2^n):u_1^n \neq U_1^n} \text{Pr} \left\{ E(u_1^n, w_2^n) |E \cap F \cap G\right\} \right\}, \label{firstbreakdown}
\end{align}
where we have used the short-hand $E$, $F$ and $G$ to denote events $E(U_1^n, w_2^n(U_2^n) )$, $F(U_1^n, U_2^n)$ and $G(U_1^n, U_2^n)$, respectively.

The first term in (\ref{firstbreakdown}) is bounded by $3\epsilon$ as shown by (\ref{rate_distortion}). From the achievability results of multi-terminal rate-distortion theory \cite{Berger:book}, the second term in (\ref{firstbreakdown}) is bounded by $\epsilon$ for sufficiently large $n$. The third term in (\ref{firstbreakdown}) is bounded by $\epsilon$ for sufficiently large $n$ based on the AEP \cite{Cover:book}. Hence, from now on, we will concentrate on the fourth term in (\ref{firstbreakdown}).

The fourth term in (\ref{firstbreakdown}) may be upper bounded by the sum of the following four terms, which will be denoted by $A_1, A_2, A_3$, and $A_4$, respectively:
\begin{align}
A_1 \overset{\triangle}{=} &\mathbf{E}\left\{ \sum_{\scriptsize{\shortstack{$u_1^n \neq U_1^n$\\$w_1^n(u_1^n) \neq w_1^n(U_1^n)$}}} \text{Pr} \left\{ E(u_1^n, w_2^n(U_2^n)) |E\cap F \cap G\right\} \right\} \\
A_2 \overset{\triangle}{=}&\mathbf{E}\left\{ \sum_{\scriptsize{\shortstack{$u_1^n \neq U_1^n$\\$w_1^n(u_1^n) \neq w_1^n(U_1^n)$\\$w_2^n \neq w_2^n(U_2^n)$}}} \text{Pr} \left\{ E(u_1^n, w_2^n) |E\cap F \cap G\right\} \right\}
\end{align}
\begin{align}
A_3 \overset{\triangle}{=}&\mathbf{E}\left\{ \sum_{\scriptsize{\shortstack{$u_1^n \neq U_1^n$\\$w_1^n(u_1^n) = w_1^n(U_1^n)$}}} \text{Pr} \left\{ E(u_1^n, w_2^n(U_2^n)) |E \cap F \cap G\right\} \right\}
 \end{align}
 and
 \begin{align}
A_4 \overset{\triangle}{=}& \mathbf{E}\left\{ \sum_{\scriptsize{\shortstack{$u_1^n \neq U_1^n$\\$w_1^n(u_1^n)= w_1^n(U_1^n)$\\$w_2^n \neq w_2^n(U_2^n)$}}} \text{Pr} \left\{ E(u_1^n, w_2^n)| E \cap F \cap G\right\} \right\}.
\end{align}
First, we upper bound $A_1$. Define the set
\begin{align}
\mathcal{B}_1=\{&u_1^n \in \mathcal{U}_1^n: u_1^n \neq U_1^n, w_1^n(u_1^n) \neq w_1^n(U_1^n), (u_1^n, w_1^n(u_1^n)) \in T_\epsilon^n(U_1 W_1|Y_1^n V_2^n w_2^n(U_2^n) q^n)\}.
\end{align}
Then, we have
\begin{align}
\mathbf{E} \left\{ |\mathcal{B}_1|\big|E \cap F \cap G \right\} \leq &2^{nH(U_1|Y_1, V_2, W_2, Q)+n\epsilon} 2^{nH(W_1|U_1, Y_1, V_2, W_2, Q)+n\epsilon} 2^{-nH(W_1|U_1,Q)+n \epsilon}.
\end{align}
%
%
%
Hence, we may write
\begin{align}
A_1 & = \mathbf{E} \left\{\sum_{u_1^n \in \mathcal{B}_1} \text{Pr} \left\{ E(u_1^n, w_2^n(U_2^n)) |E\cap F \cap G\right\} \right\} \label{error1start}\\
& \leq \mathbf{E} \left\{|\mathcal{B}_1|\max_{u_1^n \in \mathcal{B}_1} \text{Pr} \left\{ E(u_1^n, w_2^n(U_2^n)) |E\cap F \cap G\right\} \right\}\\
& = \mathbf{E} \bigg\{|\mathcal{B}_1|\max_{u_1^n \in \mathcal{B}_1} \text{Pr} \{ X_1^n(u_1^n, w_1^n(u_1^n)) \in  T_\epsilon^n(X_1|u_1^n w_1^n(u_1^n) w_2^n(U_2^n) Y_1^n V_2^n q^n) |E\cap F \cap G\} \bigg\}\\
& \leq \mathbf{E} \bigg\{|\mathcal{B}_1|\max_{u_1^n \in \mathcal{B}_1} 2^{nH(X_1|U_1,W_1, W_2, Y_1, V_2, Q)+n\epsilon} 2^{-nH(X_1|U_1,W_1,Q)+n \epsilon} \big| E \cap F \cap G  \bigg\}\\
& \leq 2^{nH(U_1)} 2^{-n I(U_1,W_1,X_1;Y_1, V_2|W_2, Q)+5n \epsilon} \label{error1}\\
& \leq 2^{nH(U_1)} 2^{-n I(X_1;Y_1, V_2|W_2, Q)+5n \epsilon} \label{MarkovA1}
\end{align}
where (\ref{MarkovA1}) follows because the distribution in (\ref{dis}) satisfies the Markov chain relationship $(U_1, W_1) \rightarrow (X_1, W_2, Q) \rightarrow (V_2, Y_1)$.
Next, we upper bound $A_2$. Define the set
\begin{align}
\mathcal{B}_2=\{u_1^n \in \mathcal{U}_1^n, w_2^n \in \mathcal{C}_w^2: u_1^n \neq U_1^n,& w_1^n(u_1^n) \neq w_1^n(U_1^n),
w_2^n \neq w_2^n(U_2^n), \nonumber\\
&(u_1^n, w_1^n(u_1^n), w_2^n) \in T_\epsilon^n(U_1 W_1 W_2|Y_1^n V_2^n q^n) \}.
\end{align}
Then, we have
\begin{align}
\mathbf{E} \{|\mathcal{B}_2| \} \leq &2^{nH(W_2|Y_1,V_2, Q)+n\epsilon} 2^{-nH(W_2|Q)+n\epsilon} (L_2-1)\nonumber\\
& 2^{n H(U_1|W_2,Y_1,V_2,Q)+n \epsilon}
 2^{n H(W_1|U_1,W_2,Y_1, V_2, Q)+n \epsilon}2^{-n H(W_1|U_1, Q)+n \epsilon}.
\end{align}
%
%
%
Similarly to (\ref{error1start})-(\ref{error1}), we may write
\begin{align}
A_2 &=\mathbf{E}\left\{ \sum_{(u_1^n, w_2^n) \in \mathcal{B}_2} \text{Pr} \left\{ E(u_1^n, w_2^n) |E\cap F \cap G\right\} \right\}\\
& \leq 2^{nH(U_1)} L_2 2^{-nI(U_1,W_1, X_1, W_2;V_2,Y_1|Q)+7 n \epsilon}\\
&=2^{nH(U_1)} L_2 2^{-nI(X_1, W_2;V_2,Y_1|Q)+7 n \epsilon} \label{MarkovA2}
\end{align}
where (\ref{MarkovA2}) follows from the same reason as (\ref{MarkovA1}).
Next, we upper bound $A_3$. Define the set
\begin{align}
\mathcal{B}_3=\{u_1^n \in& \mathcal{U}_1^n: u_1^n \neq U_1^n, w_1^n(u_1^n)=w_1^n(U_1^n),u_1^n \in T_\epsilon^n(U_1|w_1^n(U_1^n) Y_1^n V_2^n w_2^n(U_2^n) q^n) \}.
\end{align}
Then, we have
\begin{align}
\mathbf{E} \left\{|\mathcal{B}_3| \big| E \cap F \cap G \right\} \leq 2^{nH(U_1|W_1,Y_1, V_2, W_2, Q)+n\epsilon} \frac{1}{2^{-nI(U_1;W_1|Q)-2n \epsilon} L_1}
\end{align}
which follows from the fact that we always choose randomly from at least $L_1 2^{n I(U_1;W_1|Q)-2n\epsilon}$ choices to get $w_1^n(u_1^n)$.
Similarly to (\ref{error1start})-(\ref{error1}), we may write
\begin{align}
A_3& =\mathbf{E} \left\{\sum_{u_1^n \in \mathcal{B}_3} \text{Pr} \left\{ E(u_1^n, w_2^n(U_2^n)) |E\cap F \cap G\right\} \right\}\\
& \leq \frac{2^{nH(U_1)}}{L_1} 2^{-n I(U_1,X_1;Y_1, V_2|W_1, W_2, Q)+5 n \epsilon}\\
& \leq \frac{2^{nH(U_1)}}{L_1} 2^{-n I(X_1;Y_1, V_2|W_1, W_2, Q)+5 n \epsilon} \label{MarkovA3}
\end{align}
where (\ref{MarkovA3}) follows from the same reason as (\ref{MarkovA1}).
Finally, we upper bound $A_4$. Define the set
\begin{align}
\mathcal{B}_4=\{u_1^n \in \mathcal{U}_1^n, w_2^n \in \mathcal{C}_w^2: u_1^n \neq U_1^n, &w_1^n(u_1^n)=w_1^n(U_1^n),
w_2^n \neq w_2^n(U_2^n), \nonumber\\
& (u_1^n, w_2^n) \in T_\epsilon^n(U_1W_2|w_1^n(U_1^n) Y_1^n V_2^n q^n) \}.
\end{align}
Then, we have
\begin{align}
\mathbf{E} \left\{|\mathcal{B}_4| \big| E \cap F \cap G \right\}  \leq & 2^{nH(W_2|Y_1,V_2, W_1, Q)+n\epsilon} 2^{-nH(W_2|Q)+n\epsilon} (L_2-1) \nonumber\\
& 2^{n H(U_1|W_1,W_2,Y_1,V_2,Q)+n\epsilon} \frac{1}{2^{-nI(U_1;W_1|Q) -2n \epsilon}L_1}.
\end{align}
Similarly to (\ref{error1start})-(\ref{error1}), we may write
\begin{align}
A_4 &= \mathbf{E} \left\{ \sum_{(u_1^n,w_2^n) \in \mathcal{B}_4} \left[E(u_1^n, w_2^n)| E \cap F \cap G\right] \right\}\\
& \leq \frac{L_2}{L_1} 2^{nH(U_1)} 2^{-n I(U_1, X_1, W_2;Y_1, V_2|W_1, Q)+7 n \epsilon}\\
& \leq \frac{L_2}{L_1} 2^{nH(U_1)} 2^{-n I( X_1, W_2;Y_1, V_2|W_1, Q)+7 n \epsilon} \label{MarkovA4}
\end{align}
where (\ref{MarkovA4}) follows from the same reason as (\ref{MarkovA1}).

We have similar probability of error calculations at Receiver 2. Since
\begin{align}
P_e^n \leq \mathbf{E}_{Q^n}[P_e^1+P_e^2],
\end{align}
for this achievability scheme, as long as the following equations are satisfied,
\begin{align}
H(U_1) & \leq I(X_1;V_2,Y_1|W_2, Q), \label{suf1}\\
H(U_1)-\log L_1 & \leq I(X_1;V_2,Y_1|W_1,W_2, Q),  \\
H(U_1)+\log L_2 & \leq I(W_2, X_1;V_2,Y_1|Q), \nonumber\\
H(U_1)+ \log L_2- \log L_1 & \leq I(W_2, X_1;V_2, Y_1|W_1, Q), \\
H(U_2) & \leq I(X_2;V_1,Y_2|W_1, Q), \nonumber\\
H(U_2)-\log L_2 & \leq I(X_2;V_1,Y_2|W_1,W_2, Q),\\
H(U_2)+\log L_1 & \leq I(W_1, X_2;V_1,Y_2| Q), \nonumber\\
H(U_2)+\log L_1- \log L_2 & \leq I(W_1, X_2;V_1, Y_2|W_2, Q), \\
\log L_1 &\geq I(U_1;W_1|Q) \mbox{ and }\\
\log L_2 &\geq I(U_2;W_2|Q), \label{suf2}
\end{align}
for some $p(q)$, $p(w_1,x_1|u_1,q)$, and $p(w_2,x_2|u_2,q)$,
the probability of error is arbitrarily small for sufficiently large $n$.

By Fourier-Motzkin elimination, we obtain the sufficient conditions given in Theorem \ref{wish}.

\section{Proof of Lemma \ref{n_letter_message_region}} \label{proof_n_letter_message_region}

We first start with the proof of achievability. Fix distributions $p(s_{1s})$, $p(x_1|s_{1s})$, $p(s_{2s})$ and $p(x_2|s_{2s})$. For codebook at Transmitter $k$, $k=1,2$, we generate an inner codebook of $2^{NR_{ks}}$ i.i.d. codewords of length $N$ with probability $\prod_{i=1}^N p(s_{ks,i})$. Then, for each codeword of the inner codebook, we generate an outer codebook of $2^{NR_{kp}}$ i.i.d. codewords of length $N$ with probability $\prod_{i=1}^N p(x_{k,i}|s_{ks,i})$. For $W_{ks}=w_{ks}$ and $W_{kp}=w_{kp}$, Transmitter $k$ sends the $w_{kp}$-th codeword of the $w_{ks}$-th outer codebook. For decoding, Receiver 1 finds the codeword in all possible outer codebooks that is jointly typical with the received sequence and the $w_{2s}$-th codeword of the inner codebook of Transmitter 2. Similarly for Receiver 2. The probability of error analysis follows from standard arguments \cite{Cover:book}, and we can show that the probability of error can be driven to zero as $N \rightarrow \infty$, as long as the rates satisfy the following conditions:
\begin{align}
R_{1p} & \leq I(X_1;Y_1|S_{1s}, S_{2s}), \\
R_{1s}+R_{1p}& \leq I(X_1;Y_1|S_{2s}), \\
R_{2p} & \leq I(X_2;Y_2|S_{1s}, S_{2s}) \mbox{ and }\\
R_{2s}+R_{2p}& \leq I(X_2;Y_2|S_{1s}).
\end{align}

For each $n$, similarly to \cite[Theorem 5]{Shannon:1961}, by treating the interference channel $p^n(y_1^n, y_2^n|x_1^n,x_2^n)$, which is a product channel of $p(y_1,y_2|x_1,x_2)$, as a memoryless channel, we conclude that the rates satisfying the following conditions are achievable for any $n$:
\begin{align}
R_{1p} & \leq \frac{1}{n}I(X_1^n;Y_1^n|S_{1s}^n, S_{2s}^n), \\
R_{1s}+R_{1p}& \leq \frac{1}{n}I(X_1^n;Y_1^n|S_{2s}^n), \\
R_{2p} & \leq \frac{1}{n}I(X_2^n;Y_2^n|S_{1s}^n, S_{2s}^n) \mbox{ and }\\
R_{2s}+R_{2p}& \leq \frac{1}{n}I(X_2^n;Y_2^n|S_{1s}^n),
\end{align}
i.e., any rate quadruplet $(R_{1s}, R_{1p}, R_{2s}, R_{2p}) \in \mathcal{G}^n$ is achievable. By the definition of the capacity region, the limiting points of $\mathcal{G}^n$ are also achievable, and thus, we have proved the achievability of all the points in $\mathcal{C}_I$.

We next prove the converse. For any $\left(2^{n R_{1s}}, 2^{n R_{1p}}, 2^{n R_{2s}}, 2^{nR_{2p}}, n\right)$ code, denote its input to the channel as random variables $X_1^n$ and $X_2^n$ and the output of the channel as random variables $Y_1^n$, $Y_2^n$.

Arbitrarily choose $M_{1s} \overset{\triangle}{=} 2^{nR_{1s}}$ $n$-letter sequences $u^{1s}_1, u^{1s}_2, \cdots, u^{1s}_{M_{1s}}$ all in $\mathcal{X}_1^n$, and $M_{2s}\overset{\triangle}{=} 2^{nR_{2s}}$ $n$-letter sequences $u^{2s}_1, u^{2s}_2, \cdots, u^{2s}_{M_{2s}}$ all in $\mathcal{X}_2^n$. Form a one-to-one correspondence between $W_{1s}$, $W_{2s}$ and $S_{1s}^n$, $S_{2s}^n$, respectively by
\begin{align}
p^n(S_{1s}^n=u^n|W_{1s}=w_{1s})&=\left\{
\begin{array}{ll}
1 & \text{ if } u^n=u^{1s}_{w_{1s}}, \quad w_{1s}=1,2,\cdots, M_{1s} \\
0 & \text{ otherwise }
\end{array} \right. \label{test3}\\
p^n(S_{2s}^n=u^n|W_{2s}=w_{2s})&=\left\{
\begin{array}{ll}
1 & \text{ if } u^n=u^{2s}_{w_{2s}}, \quad w_{2s}=1,2,\cdots, M_{2s}\\
0 & \text{ otherwise }
\end{array} \right. \label{test5}
\end{align}

By Fano's inequality \cite{Cover:book}, we have
\begin{align}
n R_{1p}&=H(W_{1p})=H(W_{1p}|W_{1s}, W_{2s})\\
&=I(W_{1p};Y_1^n|W_{1s}, W_{2s})+H(W_{1p}|Y_1^n, W_{1s}, W_{2s})\\
& \leq I(W_{1p};Y_1^n|W_{1s}, W_{2s})+H(W_{1p}|Y_1^n, W_{2s})\\
& \leq I(W_{1p};Y_1^n|W_{1s}, W_{2s})+n \delta(P_e^n) \label{use_delta}\\
& \leq I(X_1^n;Y_1^n|W_{1s}, W_{2s})+n \delta(P_e^n)\label{data_proc2}\\
& = I(X_1^n;Y_1^n|S_{1s}^n, S_{2s}^n)+n \delta(P_e^n) \label{test_channel}
\end{align}
where $\delta(x)$ in (\ref{use_delta}) is a non-negative function approaching zero as $x \rightarrow 0$, (\ref{data_proc2}) follows from data processing inequality \cite{Cover:book} because the distributions factor as $p(w_{1p})p(w_{1s})p(x_1^n|w_{1p},w_{1s})\break p(w_{2p})p(w_{2s})p(x_2^n|w_{2p},w_{2s})p(y_1^n|x_1^n,x_2^n)$ and satisfy the Markov chain relationship $(W_{1p}, W_{1s}) \rightarrow (X_1^n, W_{2s}) \rightarrow Y_1^n$, and  (\ref{test_channel}) follows from the definitions of the sequences $S_{1s}^n$ and $S_{2s}^n$ in (\ref{test3}) and (\ref{test5}), respectively. We also have
\begin{align}
nR_{1s}+nR_{1p}&=H(W_{1s}, W_{1p})=H(W_{1s}, W_{1p}|W_{2s})\\
& =I(W_{1s}, W_{1p};Y_1^n|W_{2s})+H(W_{1s}, W_{1p}|Y_1^n, W_{2s})\\
& \leq I(W_{1s}, W_{1p};Y_1^n|W_{2s})+n \delta(P_e^n)\\
& \leq I(X_1^n;Y_1^n|W_{2s})+n \delta(P_e^n) \label{data_proc1}\\
&= I(X_1^n;Y_1^n|S_{2s}^n)+n \delta(P_e^n) \label{test_channel2}
\end{align}
where (\ref{data_proc1}) follows from the same reason as (\ref{data_proc2}),  and (\ref{test_channel2}) follows from the same reason as (\ref{test_channel}).

Similarly, we have
\begin{align}
nR_{2p} & \leq I(X_2^n;Y_2^n|S_{1s}^n, S_{2s}^n)+n \delta(P_e^n)\\
nR_{2s}+nR_{2p} & \leq I(X_2^n;Y_2^n|S_{1s}^n)+n \delta(P_e^n).
\end{align}
Hence, we have proved that for all $n$,
\begin{align}
(R_{1s}-\delta(P_e^n), R_{1p}-\delta(P_e^n), R_{2s}-\delta(P_e^n), R_{2p}-\delta(P_e^n)) \in \mathcal{G}^n.
\end{align}
Since the region $\mathcal{C}_I$ as defined in (\ref{ci}) contains $\mathcal{G}^n$ for every $n$ \cite[Theorem 5]{Shannon:1961}, we have
\begin{align}
(R_{1s}-\delta(P_e^n), R_{1p}-\delta(P_e^n), R_{2s}-\delta(P_e^n), R_{2p}-\delta(P_e^n)) \in \mathcal{C}_I \label{include}
\end{align}
for all $n$. For codes where $P_e^n \rightarrow 0$ as $n \rightarrow \infty$, we have
\begin{align}
(R_{1s}, R_{1p}, R_{2s}, R_{2p}) \in \mathcal{C}_I \label{closed}
\end{align}
since $\mathcal{C}_I$ is closed \cite[Theorem 5]{Shannon:1961}.
This concludes the converse part of the proof.

\section{Proof of Theorem \ref{sepnew}} \label{proof_theorem_deterministic}
The achievability part of the proof is straightforward. If (\ref{si_int}) holds, then there exists a rate quadruplet $(R_{1s}, R_{1p}, R_{2s}, R_{2p})$ in the interior of $\mathcal{C}$ such that $H(V_k) \leq R_{ks}$ and $H(U_k|V_k) \leq R_{kp}$ for $k=1,2$. Transmitter $k$ first compresses $V_k$ into index $W_{ks}$ with rate $H(V_k)$, and then $U_k|V_k=v_k$ into index $W_{kp}(v_k)$ into rate $H(U_k|V_k)$, for all $v_k$ in the typical set. Then the indices can be transmitted reliably over the channel since $(R_{1s}, R_{1p}, R_{2s}, R_{2p})$ is in the capacity region of the underlying interference channel with message side information $W_{1s}$ at Receiver 2 and $W_{2s}$ at Receiver 1.

To prove the converse, we write
\begin{align}
nH(U_1|V_1)&=H(U_1^n|V_1^n)=H(U_1^n|V_1^n, V_2^n)\\
&=I(U_1^n;Y_1^n|V_1^n, V_2^n)+H(U_1^n|Y_1^n, V_1^n, V_2^n)\\
& \leq I(U_1^n;Y_1^n|V_1^n, V_2^n)+H(U_1^n|Y_1^n, V_2^n)\\
& \leq I(U_1^n;Y_1^n|V_1^n, V_2^n)+n \delta(P_e^n) \label{fano1}\\
& \leq I(X_1^n;Y_1^n|V_1^n, V_2^n)+n \delta(P_e^n) \label{data_proc3}
\end{align}
where (\ref{fano1}) follows from Fano's inequality and $\delta(x)$ is a non-negative function approaching zero as $x \rightarrow 0$, and (\ref{data_proc3}) follows from the data processing inequality, in other words, from the Markov chain relationship $(U_1^n, V_1^n) \rightarrow (X_1^n, V_2^n) \rightarrow Y_1^n$. We can also write
\begin{align}
nH(V_1)+nH(U_1|V_1)&=n H(U_1, V_1)\\
& =n H(U_1) \label{deterministicside}\\
& =H(U_1^n)\\
&=H(U_1^n|V_2^n)\\
&=I(U_1^n;Y_1^n|V_2^n)+H(U_1^n|Y_1^n, V_2^n)\\
& \leq I(U_1^n;Y_1^n|V_2^n)+n \delta(P_e^n)\label{fano2}\\
& \leq I(X_1^n;Y_1^n|V_2^n)+n \delta(P_e^n) \label{data_proc4}
\end{align}
where (\ref{deterministicside}) follows because $V_1$ is a deterministic function of $U_1$, and (\ref{fano2}) follows from Fano's inequality, and (\ref{data_proc4}) follows from the same reasoning as applied to (\ref{data_proc3}). Similarly, we have
\begin{align}
nH(U_2|V_2)& \leq I(X_2^n;Y_2^n|V_1^n, V_2^n)+n \delta(P_e^n) \label{similar1} \mbox{ and } \\
nH(V_2)+n H(U_2|V_2) & \leq I(X_2^n;Y_2^n|V_1^n)+n \delta(P_e^n). \label{similar2}
\end{align}
Hence, from (\ref{data_proc3}), (\ref{data_proc4}), (\ref{similar1}) and (\ref{similar2}), we have
\begin{align}
(H(V_1)-\delta(P_e^n), H(U_1|V_1)-\delta(P_e^n), H(V_2)-\delta(P_e^n), H(U_2|V_2)-\delta(P_e^n)) \in \mathcal{G}^n
\end{align}
which by the same reasoning as applied to (\ref{include}) and (\ref{closed}), for codes where $P_e^n \rightarrow 0$ as $n \rightarrow \infty$, we have
\begin{align}
(H(V_1), H(U_1|V_1), H(V_2), H(U_2|V_2)) \in \mathcal{C}_I
\end{align}
which concludes the proof.

\section{Proof of Lemma \ref{message_region}} \label{proof_message_region}
Due to the fact that the proof of this lemma is very similar to the proof of the capacity region in \cite{Liu_Goldsmith:2008ISIT}, we omit certain details. For notational convenience, denote
the channel of $p(y_1|x_1)$ as $\bar{V}_1$ and the channel $p(y_2|x_1,x_2)$ as $\bar{V}_2$, where
\begin{align}
\bar{V}_1(a|b)&=\text{Pr}\{Y_1=a|X_1=b\},
\end{align}
and
\begin{align}
\bar{V}_2(c|b,d)&=\text{Pr}\{Y_2=c|X_1=b,X_2=d\}.
\end{align}
\subsection{Converse Result}
The converse result derived in this subsection is valid for any Z-interference channel satisfying Condition 1. The tool that we are using comes from the following lemma.
\begin{lemma} \cite[page 314, eqn (3.34)]{Csiszar:book} \label{nletterdifference}

For any $n$, and any random variables $Y^n$ and $Z^n$ and $W$, we have
\begin{align}
H(Z^n|W)-&H(Y^n|W)=\sum_{i=1}^n (H(Z_i|Y^{i-1}, Z_{i+1}, Z_{i+2}, \cdots, Z_n, W)\nonumber\\
&\hspace{0.43in}-H(Y_i|Y^{i-1}, Z_{i+1}, Z_{i+2}, \cdots, Z_n, W) ).
\end{align}
\end{lemma}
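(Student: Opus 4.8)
The plan is to establish the identity by a straightforward telescoping argument over a family of hybrid conditional entropies that interpolate between $H(Z^n|W)$ and $H(Y^n|W)$. For $i=1,2,\ldots,n+1$, define
\[
T_i = H(Y^{i-1}, Z_i, Z_{i+1}, \cdots, Z_n \mid W),
\]
with the convention that $Y^0$ is empty and that for $i=n+1$ the block $Z_{n+1},\cdots,Z_n$ is empty. With these conventions $T_1 = H(Z^n|W)$ and $T_{n+1} = H(Y^n|W)$, so that
\[
H(Z^n|W) - H(Y^n|W) = T_1 - T_{n+1} = \sum_{i=1}^n (T_i - T_{i+1}).
\]

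The key step is to evaluate each difference $T_i - T_{i+1}$. Both $T_i$ and $T_{i+1}$ contain the common sub-block $(Y^{i-1}, Z_{i+1},\cdots,Z_n)$ together with one extra variable — namely $Z_i$ in the case of $T_i$, and $Y_i$ in the case of $T_{i+1}$, since $Y^i = (Y^{i-1}, Y_i)$. Applying the chain rule for entropy to peel off that extra variable gives
\[
T_i = H(Y^{i-1}, Z_{i+1},\cdots,Z_n \mid W) + H(Z_i \mid Y^{i-1}, Z_{i+1},\cdots,Z_n, W)
\]
and
\[
T_{i+1} = H(Y^{i-1}, Z_{i+1},\cdots,Z_n \mid W) + H(Y_i \mid Y^{i-1}, Z_{i+1},\cdots,Z_n, W).
\]
Subtracting, the common term cancels and $T_i - T_{i+1} = H(Z_i \mid Y^{i-1}, Z_{i+1},\cdots,Z_n, W) - H(Y_i \mid Y^{i-1}, Z_{i+1},\cdots,Z_n, W)$, which is precisely the $i$-th summand in the statement of the lemma. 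Summing over $i=1,\ldots,n$ and using the telescoping identity above completes the proof.

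There is no substantive obstacle here; the argument uses only the chain rule for entropy, and is the standard proof of what is sometimes called the Csisz\'ar sum identity. The only point requiring a little care is the bookkeeping at the two boundaries — verifying that $T_1$ and $T_{n+1}$ collapse to $H(Z^n|W)$ and $H(Y^n|W)$ once the empty blocks are interpreted correctly, and that the chain-rule splitting of $T_i$ and $T_{i+1}$ remains valid at $i=1$ (no $Y$-prefix present) and at $i=n$ (no $Z$-variables beyond $Z_n$). Alternatively one may simply invoke the cited reference, but the telescoping derivation is short enough to reproduce in full.
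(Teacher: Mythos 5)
Your proof is correct: the hybrid quantities $T_i = H(Y^{i-1}, Z_i,\cdots,Z_n\mid W)$ telescope properly, each difference $T_i - T_{i+1}$ is evaluated correctly by the chain rule, and the boundary conventions at $i=1$ and $i=n+1$ are handled as needed. The paper itself offers no proof of this lemma --- it simply cites Csisz\'ar and K\"orner --- and your telescoping argument is exactly the standard derivation of that cited identity (the Csisz\'ar sum identity), so it serves as a complete, self-contained substitute for the reference.
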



Since the rate triplets $(R_{1s}, R_{1p}, R_{2p})$ is achievable, there exist
two sequences of codebooks 1 and 2, denoted by $\mathcal{C}_1^n$ and $\mathcal{C}_2^n$, of rate $R_{1s}+ R_{1p}$ and $R_{2p}$, and probability of error less than $\epsilon_n$, where $P_e^n \rightarrow 0$ as $n \rightarrow \infty$.
Let us define $X_1^n$ and $X_2^n$ be uniformly distributed on codebooks 1 and 2, respectively. Let $Y_1^n$ be connected via $\bar{V}_1^n$ to $X_1^n$, $Y_2^n$ be connected via $\bar{V}_2^n$ to $X_1^n$ and $X_2^n$.

We start the converse with Fano's inequality \cite{Cover:book},
\begin{align}
nR_{1p}&=H(W_{1p})\\
&\leq I(W_{1p};Y_1^n)+n \delta(P_e^n)\\
&\leq I(W_{1p};Y_1^n|W_{1s})+n \delta(P_e^n) \label{independent}\\
&= H(Y_1^n|W_{1s})-H(Y_1^n|W_{1s}, W_{1p}, X_1^n)+n \delta(P_e^n) \label{deterministic}\\
& = H(Y_1^n|W_{1s})-H(Y_1^n|X_1^n)+n \delta(P_e^n) \label{markov}\\
&=H(Y_1^n|W_{1s})-\sum_{i=1}^n H(Y_{1i}|X_{1i})+ n \delta(P_e^n) \label{memoryless}
\end{align}
where (\ref{independent}) follows from the fact that $W_{1s}$ and $W_{1p}$ are independent, (\ref{deterministic}) follows from the fact that without loss of generality, we may consider deterministic encoders, (\ref{markov}) follows from the Markov chain relationship $(W_{1s}, W_{1p}) \rightarrow X_1^n \rightarrow Y_1^n$, and (\ref{memoryless}) follows from the memoryless nature of $\bar{V}_1^n$. We also have
\begin{align}
nR_{1s}+n R_{1p}&=H(W_{1p}, W_{1s})\\
&\leq I(W_{1p}, W_{1s};Y_1^n)+n \delta(P_e^n)\\
&\leq I(X_1^n;Y_1^n)+n \delta(P_e^n)\label{dataprocessing}\\
&\leq \sum_{i=1}^n I(X_{1i};Y_{1i})+ n \delta(P_e^n) \label{sumlater}
\end{align}
where (\ref{dataprocessing}) follows from the data processing inequality \cite{Cover:book}. Furthermore, we have
\begin{align}
n R_{2p} =H(W_{2p})&=H(W_{2p}|W_{1s}) \label{independent2}\\
& \leq I(W_{2p};Y_2^n|W_{1s})+n \delta(P_e^n)\\
& \leq I(X_2^n;Y_2^n|W_{1s})+n \delta(P_e^n) \label{dataprocessing2}\\
& = H(Y_2^n|W_{1s})-H(Y_2^n|X_2^n, W_{1s})+n\delta(P_e^n)\\
& \leq \sum_{i=1}^n H(Y_{2i})-H(Y_2^n|X_2^n, W_{1s})+n\delta(P_e^n) \label{conditionreduce}\\
& \leq n \tau -H(Y_2^n|X_2^n, W_{1s})+n\delta(P_e^n) \label{usemax}
\end{align}
where (\ref{independent2}) follows from the independence of $W_{2p}$ and $W_{1s}$, (\ref{dataprocessing2}) follows from the Markov chain relationship $W_{2p} \rightarrow (X_1^n, W_{1s}) \rightarrow Y_2^n$, (\ref{conditionreduce}) follows from the fact that conditioning reduces entropy, and (\ref{usemax}) follows from the definition of $\tau$ in (\ref{definetau}).

Let us define another channel, $\hat{V}_2: \mathcal{X}_1 \rightarrow \mathcal{Y}_2$, as
\begin{align}
\hat{V}_2(t|x_1)=V_2(t|x_1, \bar{x}_2),
\end{align}
where $\bar{x}_2$ is an arbitrary element in $\mathcal{X}_2$. Further, let us define another sequence of random variables, $T^n$, which is connected via $\hat{V}_2^n$, the memoryless channel $\hat{V}_2$ used $n$ times, to $X_1^n$, i.e., $T_i \rightarrow X_{1i} \rightarrow T_{\{i\}^c}, X_{1\{i\}^c}, X_2^n, Y_1^n, Y_2^n$. Also define $\bar{x}_2^n$ as the $n$-sequence with $\bar{x}_2$ repeated $n$ times. It is easy to see that
\begin{align}
H(Y_2^n|X_2^n, W_{1s})&=\sum_{x_2^n \in \mathcal{C}_2^n} \sum_{w=1}^{2^{nR_{1s}}} \frac{1}{2^{nR_{1s}}} \frac{1}{2^{nR_{2p}}} H(Y_2^n|X_2^n=x_2^n, W_{1s}=w)\\
&=\sum_{w=1}^{2^{nR_{1s}}} \frac{1}{2^{nR_{1s}}} H(Y_2^n|X_2^n=\bar{x}_2^n, W_{1s}=w) \label{usecon1}\\
&=\sum_{w=1}^{2^{nR_{1s}}} \frac{1}{2^{nR_{1s}}} H(T^n| W_{1s}=w) \label{useT}\\
&= H(T^n|W_{1s}) \label{sumlater2}
\end{align}
where (\ref{usecon1}) follows from the fact that the channel under consideration satisfies condition \ref{shiftn}, and (\ref{useT}) follows from the definition of $T^n$.

By applying Lemma \ref{nletterdifference}, we have
\begin{align}
H(T^n|W_{1s})-&H(Y_1^n|W_{1s})=\sum_{i=1}^n H(T_i|Y_1^{i-1}, T_{i+1}, T_{i+2}, \cdots, T_n, W_{1s}) \nonumber\\
&\hspace{0.43in}-H(Y_{1i}|Y_1^{i-1}, T_{i+1}, T_{i+2}, \cdots, T_n, W_{1s}). \label{km1}
\end{align}
Furthermore, since conditioning reduces entropy, we can write
\begin{align}
H(Y_1^n|W_{1s})&=\sum_{i=1}^n H(Y_{1i}|Y_1^{i-1}, W_{1s})\geq \sum_{i=1}^n H(Y_{1i}|Y_{1}^{i-1}, T_{i+1}, T_{i+2}, \cdots, T_n, W_{1s}). \label{km2}
\end{align}
Define the following auxiliary random variables,
\begin{align}
W_i=Y_{1}^{i-1}, T_{i+1}, T_{i+2}, \cdots, T_n, W_{1s}, \qquad i=1,2,\cdots,n.
\end{align}
Further define $Q$ as a random variable that is uniform on the set $\{1,2, \cdots, n\}$ and independent of everything else. Also, define the following auxiliary random variables:
\begin{align}
W=(W_Q,Q), \quad X_1=X_{1Q}, \quad Y_1=Y_{1Q} \mbox{ and }\quad T=T_Q.
\end{align}
Then, from (\ref{km1}) and (\ref{km2}), we have
\begin{align}
n^{-1} \left(H(T^n|W_{1s})-H(Y_1^n|W_{1s}) \right)&=H(T|W)-H(Y_1|W) \mbox{ and } \label{korner1}\\
 n^{-1}H(Y_1^n|W_s) &\geq H(Y_1|W). \label{korner2}
\end{align}
Due to the memoryless nature of $\bar{V}_1^n$ and $\hat{V}_2^n$, the fact that $Q$ is independent of everything else, and the Markov chain relationship $T_i \rightarrow X_{1i} \rightarrow Y_{1i}$, for $i=1,2,\cdots,n$, the joint distribution of $W$, $X_1$, $Y_1$, $T$ satisfies
\begin{align}
p(w,x_1, y_1, t)&=p(w)p(x_1|w)V_1(y_1|x_1)V_2(t|x_1, \bar{x}_2). \label{distribution}
\end{align}
From (\ref{korner1}) and (\ref{korner2}), we may conclude that there exists a number $\gamma \geq 0$ such that
\begin{align}
\frac{1}{n}H(T^n|W_{1s})=H(T|W)+\gamma, \quad \frac{1}{n}H(Y_1^n|W_{1s})=H(Y_1|W)+\gamma. \label{ttconverse}
\end{align}
By combining (\ref{memoryless}), (\ref{sumlater}), (\ref{usemax}), (\ref{sumlater2}),  (\ref{distribution}), and (\ref{ttconverse}), and allowing $n \rightarrow \infty$, we obtain the following converse result:
for any Z-interference channel that satisfies Condition 1 and the case where Receiver 2 has side information $W_{1s}$, the achievable rate triplets $(R_{1s}, R_{1p}, R_{2p})$ must satisfy
\begin{align}
R_{1p} & \leq H(Y_1|W)+\gamma-H(Y_1|X_1), \label{upper1}\\
R_{1s}+R_{1p} & \leq I(X_1;Y_1) \mbox{ and } \\
R_{2p} &\leq \tau-H(T|W)-\gamma, \label{upper2}
\end{align}
for some number $\gamma \geq 0$ and distribution $p(w)p(x_1|w)$, where the mutual informations and entropies are evaluated using
$p(w,x_1, y_1,t)=p(w)p(x_1|w)V_1(y_1|x_1)V_2(t|x_1, \bar{x}_2)$.

\subsection{Achievability Result}
The achievability result derived in this subsection is valid for any Z-interference channel.
We design a codebook at Transmitter 1 such that the inner codebook carries the side information at the Receiver 2, i.e., $W_{1s}$, and part of $W_{1p}$, and the outer codebook carries the remaining part of $W_{1p}$. More specifically, the inner codebook is of rate $R_{1s}+\gamma$, and the outer codebook is of rate $R_{1p}-\gamma$. Then, we have the achievable rate region as the union over all $p(w)p(x_1|w)p(x_2)$ of
\begin{align}
R_{1p} & \leq H(Y_1|W)+\gamma-H(Y_1|X_1) \label{lowerapp1}\\
R_{1s}+R_{1p} & \leq I(X_1;Y_1)\\
R_{2p} & \leq I(X_2;Y_2|W) \mbox{ and }\\
R_{2p} & \leq I(W, X_2;Y_2)-\gamma, \label{lowerapp2}
\end{align}
 where the mutual informations are evaluated using $\break p(w,x_1,x_2,y_1,y_2)=p(w)p(x_1|w)p(x_2)V_1(y_1|x_1) V_2(y_2|x_1,x_2)$.

\subsection{Capacity Region}
Making use of Conditions 1 and 2 in the exact same way as in \cite[Section V]{Liu_Goldsmith:2008ISIT}, we can show that the converse result in (\ref{upper1})-(\ref{upper2}) and the achievability result in (\ref{lowerapp1})-(\ref{lowerapp2}) are the same for Z-interference channels satisfying Conditions 1 and 2, and hence the capacity region $\mathcal{C}_I$ in this case is given in Lemma \ref{message_region}.

\bibliographystyle{unsrt}
\bibliography{ref}

\end{document}